\newtheorem{theorem}{Theorem}[section]
\newtheorem{definition}{Definition}[section]
\newtheorem{lemma}{Lemma}[section]
\newtheorem{remark}{Remark}[section] 
\long\def\symbolfootnote[#1]#2{\begingroup%
\def\thefootnote{\fnsymbol{footnote}}\footnote[#1]{#2}\endgroup}
\def\be{\begin{equation}} \def\ee{\end{equation}}
\def\ba{\begin{array}} \def\ea{\end{array}} \def\bna{\begin{eqnarray}}
\def\ena{\end{eqnarray}}
 \def\bna{\begin{eqnarray}}
\def\ena{\end{eqnarray}} 
\begin{document}
\title{Asymptotic Equipartition Property of Output when Rate is above Capacity}

\author{\authorblockN{Xiugang Wu and Liang-Liang Xie} \\
\authorblockA{\small Department of Electrical and Computer Engineering\\
University of Waterloo, Waterloo, ON, Canada N2L 3G1 \\
Email: x23wu@uwaterloo.ca, llxie@uwaterloo.ca}
\thanks{Part of the work \cite{WuXie09} was presented at CWIT 2009.}}

\maketitle

\begin{abstract}

The output distribution, when rate is above capacity, is investigated. It is shown that there is an asymptotic equipartition property (AEP) of the typical output sequences, independently of the specific codebook used, as long as the codebook is typical according to the standard random codebook generation. This equipartition of the typical output sequences is caused by the mixup of input sequences when there are too many of them, namely, when the rate is above capacity. This discovery sheds some light on the optimal design of the compress-and-forward relay schemes.

\end{abstract}

\section{Introduction}\label{S:Introduction}

A fundamental observation of Shannon's channel coding theorem is that using a randomly generated codebook (i.i.d. generated according to some $p_0(x)$) at a rate below capacity will lead to a distribution pattern of the output sequences, by which, a decoding scheme with arbitrarily low probability of error can be devised.

In this paper, we are interested in the case when the rate is above capacity. We will show that such a pattern that can be used for decoding will disappear when there are too many input sequences, i.e., when the rate is above capacity. Instead, in this case, the output will have an asymptotic equipartition property on the set of typical output sequences (typical with respect to $p_0(y)=\sum_x p_0(x)p(y|x)$). Interestingly, this set is independent of the specific codebook used, as long as the codebook is typical according to the random codebook generation. The reason for this equipartition is that the input sequences are too dense, so that different input sequences can contribute to the same output sequence and get mixed up.

Investigating the optimal compress-and-forward relay scheme has motivated this study of output distribution when rate is above capacity. The optimality of the compress-and-forward schemes is arguably one of the most critical problems in the development of network information theory, where ambiguity always arises when decoding cannot be done correctly. In the classical approach of \cite{covelg79}, the compression scheme at the relay was only based on the distribution used for generating the codebook at the source, instead of the specific codebook generated. While many different codebooks can be generated according to the same distribution, can the knowledge of the specific codebook be helpful? There have been some discussions on this issue (e.g., \cite{XueKumarXie}). Here, in this paper, we show that the observations at the relay are somehow independent of the specific codebook used at the source, and only depend on the distribution by which the codebook is generated.

To further explore the optimality of the compress-and-forward schemes, we compare the rates needed to losslessly compress the relay's observation in two different scenarios: i) the relay uses the knowledge of the source's codebook to do the compression; ii) the relay simply ignores this knowledge. It is shown that the minimum required rates in both scenarios are the same when the rate of the source's codebook is above the capacity of the source-to-relay link.

The remainder of the paper is organized as the following. In Section \ref{S:Pre}, we first introduce some standard definitions of strongly typical sequences, and then give a definition of typical codebooks. Then, we summarize our main results in Section \ref{S:results}, followed by the proof of these results in Section \ref{S:AEP}, \ref{S:WHP} and \ref{S:proofofrate}. Finally, as an application of the results, the optimality of the compress-and-forward schemes is discussed in Section \ref{S:Relay}.

\section{Preliminaries}\label{S:Pre}
Consider a discrete memoryless channel $(\mathcal{X},p(y|x),\mathcal{Y})$ with capacity $C:=\max_{p(x)}I(X;Y)$. Under the random coding framework, a random codebook ${\bf C}$ with respect to $p_0(x)$ with rate $R$ and block length $n$ is defined as
\begin{equation}
{\bf C}:=\left\{ X^n(w)\in \mathcal{X}^n, w=1,\ldots,2^{nR} \right\},
\end{equation}
where each codeword in ${\bf C}$ is an i.i.d. random sequence generated according to a fixed input distribution $p_0(x)$.

It is well known that information can be transmitted with arbitrarily small probability of error for sufficiently large $n$ if $R<C$. In this paper, however, we are interested in the case where the rate is above capacity.

\subsection{Strong Typicality}
We begin with some standard definitions on strong typicality [3, Ch.13].

\begin{definition}
The $\epsilon$-strongly typical set with respect to $p_0(x)$, denoted by $A_{\epsilon,0}^{(n)}(X)$, is the set of sequences $x^n \in \mathcal{X}^n$ satisfying:

1. For all $a \in \mathcal{X}$ with $p_0(a)>0$,
$$\left|\frac{1}{n}N(a|x^n)-p_0(a) \right| < \frac{\epsilon}{|\mathcal{X}|},$$

2. For all $a \in \mathcal{X}$ with $p_0(a)=0$, $N(a|x^n)=0$.

$N(a|x^n)$ is the number of occurrences of $a$ in $x^n$.
\end{definition}

Similarly, we can define the $\epsilon$-strongly typical set with respect to $p_0(y)$ and denote it by $A_{\epsilon,0}^{(n)}(Y)$.

\begin{definition}
The $\epsilon$-strongly typical set with respect to $p_0(x,y)$, denoted by $A_{\epsilon,0}^{(n)}(X,Y)$, is the set of sequences $(x^n,y^n) \in \mathcal{X}^n \times \mathcal{Y}^n$ satisfying:

1. For all $(a,b) \in \mathcal{X} \times \mathcal{Y}$ with $p_0(a,b)>0$,
$$\left|\frac{1}{n}N(a,b|x^n,y^n)-p_0(a,b) \right| < \frac{\epsilon}{|\mathcal{X}||\mathcal{Y}|},$$

2. For all $(a,b) \in \mathcal{X} \times \mathcal{Y}$ with $p_0(a,b)=0$, $$N(a,b|x^n,y^n)=0.$$

$N(a,b|x^n,y^n)$ is the number of occurrences of the pair $(a,b)$ in the pair of sequences $(x^n, y^n)$.
\end{definition}

\begin{definition}
The $\epsilon$-strongly conditionally typical set with the sequence $x^n$ with respect to the conditional distribution $p(y|x)$, denoted by $A_{\epsilon}^{(n)}(Y|x^n)$, is the set of sequences $y^n \in \mathcal{Y}^n$ satisfying:

1. For all $(a,b) \in \mathcal{X} \times \mathcal{Y}$ with $p(b|a)>0$,
\begin{equation}\frac{1}{n}\left|N(a,b|x^n,y^n)- p(b|a)N(a|x^n) \right| \leq \epsilon(1+\frac{1}{|\mathcal{Y}|}),\label{E:CT1}\end{equation}

2. For all $(a,b) \in \mathcal{X} \times \mathcal{Y}$ with $p(b|a)=0$, \begin{equation}N(a,b|x^n,y^n)=0.\label{E:CT2}\end{equation}
\end{definition}

\subsection{Typical Codebooks}
\begin{definition}
For the discrete memoryless channel $(\mathcal{X},p(y|x),\mathcal{Y})$, the channel noise is said to be $\epsilon$-typical if for any given input $x^n$, the output $Y^n$ is $\epsilon$-strongly conditionally typical with $x^n$ with respect to the channel transition function $p(y|x)$, i.e., $Y^n \in A_{\epsilon}^{(n)}(Y|x^{n})$.
\end{definition}

Due to the Law of Large Numbers, the channel noise is ``typical'' with high probability.

Index the sequences in $A_{\epsilon,0}^{(n)}(Y)$ as $y_{\epsilon,0}^{n}(i), i=1,\ldots,M_{\epsilon,0}^{(n)}$, where $M_{\epsilon,0}^{(n)}=|A_{\epsilon,0}^{(n)}(Y)|$. Consider the set $F_{\epsilon,0}(i) \subseteq \mathcal{X}^n$, where each sequence in $F_{\epsilon,0}(i)$ is strongly typical and can reach $y_{\epsilon,0}^{n}(i)$ over a channel with typical noise, i.e.,
\begin{equation*}
F_{\epsilon,0}(i):=\left\{ x^{n} \in A_{\epsilon,0}^{(n)}(X): y_{\epsilon,0}^{n}(i) \in A_{\epsilon}^{(n)}(Y|x^{n})  \right\}.
\end{equation*}

The following notation is useful for defining the typical codebooks.
\begin{align*}
P_{\epsilon,0}(i)&:=\mbox{Pr}(\tilde{X}^{n} \in F_{\epsilon, 0}(i)|\tilde{X}^n \in A_{\epsilon,0}^{(n)}(X)),\\
N_{\epsilon,0}(i|\mathcal{C})&:=\sum_{w=1}^{2^{nR}} \mathbb{I} (x^{n}(w) \in F_{\epsilon, 0}(i)),
\end{align*}
where $\tilde{X}^{n}$ is drawn i.i.d. according to $p_0(x)$ and $\mathbb{I}(A)$ is the indicator function:
\begin{equation*}
\mathbb{I}(A)=
\begin{cases}
1  \text{~~~if $A$ holds,}\\
0  \text{~~~otherwise.}
\end{cases}
\end{equation*}

\begin{definition}\label{D:TCB}
A codebook $$\mathcal{C}=\left\{ x^n(w)\in \mathcal{X}^n,w=1,\ldots,2^{nR}\right\}$$ is said to be $\epsilon$-typical with respect to $p_0(x)$ if \\

\begin{enumerate}
\item $x^{n}(w) \in A_{\epsilon,0}^{(n)}(X), \forall w \in \{1,\dots,2^{nR}\}$,\\
\item $\displaystyle\sup_{i \in \{1,\ldots,M_{\epsilon,0}^{(n)}\}}  \left| \frac{N_{\epsilon,0}(i|\mathcal{C})}{2^{nR}} - P_{\epsilon,0}(i) \right|   \leq \frac{n^{3}R}{2^{nR}}$.
\end{enumerate}

\end{definition}

\section{Main Results}\label{S:results}

The main results of this paper are summarized by the following three theorems. Their proofs are presented in Sections \ref{S:AEP}, \ref{S:WHP} and \ref{S:proofofrate} respectively. The application of these results to the relay channel will be discussed in Section \ref{S:Relay}.

\begin{theorem}\label{T:OdmcU}
Given that an $\epsilon$-typical codebook $\mathcal{C}$ is used and the channel noise is also $\epsilon$-typical, then,\footnote{Same as the notation in \cite{Coverbook}, we say $a_n \doteq b_n$ if $\lim_{n \to \infty} \frac{1}{n} \log \frac{a_n}{b_n}=0$. ``$\stackrel {.}{\geq}$'' and ``$\stackrel {.}{\leq}$'' have similar interpretations.}
\begin{equation*}
\mbox{Pr}(Y^n=y_{\epsilon,0}^{n}(i)|\mathcal{C}) \stackrel{.}{=} 2^{-nH_0(Y)}, \forall i \in \{1,\ldots, M_{\epsilon,0}^{(n)} \},
\end{equation*}
when $R>I_0(X;Y)$, where both $H_0(Y)$ and $I_0(X;Y)$ are calculated according to $p_0(x,y)=p_0(x)p(y|x)$.
\end{theorem}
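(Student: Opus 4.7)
The plan is to expand the conditional probability by conditioning on the uniformly chosen transmitted index $W$, reduce the sum to the codewords lying in $F_{\epsilon,0}(i)$ via the typical-noise hypothesis, and then combine three ingredients: the strong-AEP bound $p(y^n\mid x^n)\doteq 2^{-nH_0(Y|X)}$ on jointly typical pairs, the population count $N_{\epsilon,0}(i|\mathcal{C})/2^{nR}\doteq P_{\epsilon,0}(i)$ from Definition \ref{D:TCB}, and the cardinality estimate $P_{\epsilon,0}(i)\doteq 2^{-nI_0(X;Y)}$. The hypothesis $R>I_0(X;Y)$ enters only to make the $n^{3}R/2^{nR}$ slack in Definition \ref{D:TCB} exponentially smaller than $P_{\epsilon,0}(i)$.

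First I would use $\mbox{Pr}(\text{typical noise}\mid\mathcal{C})\doteq 1$ together with the uniform message choice to write
\begin{equation*}
\mbox{Pr}(Y^n=y_{\epsilon,0}^{n}(i)\mid\mathcal{C})\doteq \frac{1}{2^{nR}}\sum_{w=1}^{2^{nR}}p\bigl(y_{\epsilon,0}^{n}(i)\mid x^n(w)\bigr)\,\mathbb{I}\bigl(y_{\epsilon,0}^{n}(i)\in A_{\epsilon}^{(n)}(Y\mid x^n(w))\bigr).
\end{equation*}
Since condition 1 of Definition \ref{D:TCB} guarantees every $x^n(w)\in A_{\epsilon,0}^{(n)}(X)$, the indicator is nonzero exactly when $x^n(w)\in F_{\epsilon,0}(i)$. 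For each such $w$ the pair $(x^n(w),y_{\epsilon,0}^{n}(i))$ is jointly strongly typical, so $p(y_{\epsilon,0}^{n}(i)\mid x^n(w))\doteq 2^{-nH_0(Y|X)}$ uniformly in $i$, and
\begin{equation*}
\mbox{Pr}(Y^n=y_{\epsilon,0}^{n}(i)\mid\mathcal{C})\doteq \frac{N_{\epsilon,0}(i|\mathcal{C})}{2^{nR}}\cdot 2^{-nH_0(Y|X)}.
\end{equation*}

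Next I would estimate $P_{\epsilon,0}(i)$ by direct counting: strong typicality gives $|F_{\epsilon,0}(i)|\doteq 2^{nH_0(X|Y)}$, each strongly typical input has $\mbox{Pr}(\tilde X^n=x^n)\doteq 2^{-nH_0(X)}$, and $\mbox{Pr}(\tilde X^n\in A_{\epsilon,0}^{(n)}(X))\doteq 1$, so $P_{\epsilon,0}(i)\doteq 2^{-nI_0(X;Y)}$. Condition 2 of Definition \ref{D:TCB} then yields $N_{\epsilon,0}(i|\mathcal{C})/2^{nR}=P_{\epsilon,0}(i)+O(n^{3}R/2^{nR})$, and because $R>I_0(X;Y)$ the slack is exponentially smaller than $P_{\epsilon,0}(i)$, giving $N_{\epsilon,0}(i|\mathcal{C})/2^{nR}\doteq 2^{-nI_0(X;Y)}$. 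Combining the last two displays gives $\mbox{Pr}(Y^n=y_{\epsilon,0}^{n}(i)\mid\mathcal{C})\doteq 2^{-nI_0(X;Y)}\cdot 2^{-nH_0(Y|X)}=2^{-nH_0(Y)}$, as claimed.

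The step I expect to require the most care is ensuring that $P_{\epsilon,0}(i)\doteq 2^{-nI_0(X;Y)}$ holds \emph{uniformly} over $i\in\{1,\ldots,M_{\epsilon,0}^{(n)}\}$: the upper bound is a routine union bound, but the matching lower bound needs the joint strong-typicality parameters to be aligned so that $x^n\in F_{\epsilon,0}(i)$ actually implies joint typicality at a level where $|F_{\epsilon,0}(i)|\doteq 2^{nH_0(X|Y)}$ is valid. Once this uniformity is secured, the comparison of $P_{\epsilon,0}(i)$ against the $n^{3}R/2^{nR}$ slack under $R>I_0(X;Y)$ is immediate.
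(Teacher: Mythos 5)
Your proposal matches the paper's proof in both structure and substance: the decomposition of $\mbox{Pr}(Y^n=y_{\epsilon,0}^{n}(i)\mid E_\epsilon,\mathcal{C})$ over codewords lying in $F_{\epsilon,0}(i)$, the per-codeword estimate $\mbox{Pr}(Y^{n}=y_{\epsilon,0}^{n}(i)\mid E_{\epsilon}, X^n=x^{n})\doteq 2^{-nH_0(Y|X)}$ (the paper's Lemma \ref{L:TransProb}), and the count $N_{\epsilon,0}(i|\mathcal{C})/2^{nR}\doteq 2^{-nI_0(X;Y)}$ obtained from condition 2 of Definition \ref{D:TCB} together with $R>I_0(X;Y)$ (the paper's Lemma \ref{L:NinT}). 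The only cosmetic difference is that you estimate $P_{\epsilon,0}(i)$ by directly counting $|F_{\epsilon,0}(i)|\doteq 2^{nH_0(X|Y)}$ rather than invoking the joint-typicality probability bound of Cover--Thomas Lemma 13.6.2, and you correctly flag as the delicate step the uniform two-sided alignment of $\epsilon$-levels between $F_{\epsilon,0}(i)$ and joint typicality, which is exactly the work done inside the paper's proof of Lemma \ref{L:NinT}.
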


Throughout this paper, we generate the codebook ${\bf C}$ at random according to $p_0(x)$ and reserve only the $\epsilon$-strongly typical codewords. Then we have Theorem \ref{T:dmcwhp} and \ref{T:rate}.

\begin{theorem}\label{T:dmcwhp}
For any $\epsilon >0$,
\begin{equation}
\mbox{Pr}({\bf C} \text{ is $\epsilon$-typical})\to 1 \ \text{as} \ n \to \infty.
\end{equation}
\end{theorem}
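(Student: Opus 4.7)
The plan is to verify separately that each of the two conditions in Definition~\ref{D:TCB} holds with probability tending to one. Condition~(1), that every codeword in ${\bf C}$ lies in $A_{\epsilon,0}^{(n)}(X)$, is automatic from the construction, since we explicitly retain only the $\epsilon$-strongly typical codewords after i.i.d.\ generation according to $p_0$.

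For condition~(2), once we condition on typicality the retained codewords are i.i.d.\ draws from $p_0$ restricted to $A_{\epsilon,0}^{(n)}(X)$, so for each fixed $i \in \{1,\dots,M_{\epsilon,0}^{(n)}\}$ the count $N_{\epsilon,0}(i|{\bf C}) = \sum_{w=1}^{2^{nR}} \mathbb{I}(X^n(w) \in F_{\epsilon,0}(i))$ is a sum of $2^{nR}$ i.i.d.\ Bernoulli$(P_{\epsilon,0}(i))$ indicators. I would apply a sharp Chernoff/Bernstein-type tail bound to show that, for each $i$, the probability
\[
\Pr\!\left(\left|\frac{N_{\epsilon,0}(i|{\bf C})}{2^{nR}} - P_{\epsilon,0}(i)\right| > \frac{n^{3}R}{2^{nR}}\right)
\]
is super-polynomially small in $n$, and then union-bound over $i$. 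Since $M_{\epsilon,0}^{(n)} = |A_{\epsilon,0}^{(n)}(Y)| \le 2^{n(H_0(Y)+\delta)}$ for any $\delta>0$ by strong typicality, this reduces the problem to producing a per-$i$ tail bound whose decay defeats the $2^{nH_0(Y)}$ union blow-up.

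The main obstacle is producing a concentration inequality strong enough to defeat this exponential union bound at the very tight deviation threshold $n^{3}R/2^{nR}$. The favorable structural fact to exploit is that $P_{\epsilon,0}(i)$ is itself exponentially small---of order $2^{-nI_0(X;Y)}$, since the preimage set $F_{\epsilon,0}(i)$ is exponentially rare within $A_{\epsilon,0}^{(n)}(X)$---so each Bernoulli indicator has tiny success probability, and a Bernstein-type bound with variance proxy $P_{\epsilon,0}(i)(1-P_{\epsilon,0}(i))$ should carry the argument. Careful bookkeeping of the two regimes $R<I_0(X;Y)$ (where $\E[N_{\epsilon,0}(i|{\bf C})]\to 0$ and the count is typically zero) and $R>I_0(X;Y)$ (where the expected count is exponentially large) is the most delicate piece, but in both regimes the combination of the sharpness of Bernstein's inequality and the smallness of $P_{\epsilon,0}(i)$ should deliver the required bound uniformly in $i$.
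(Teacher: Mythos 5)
The paper does not union-bound over $i$; it avoids that blow-up entirely by using the Vapnik--Chervonenkis machinery. Specifically, Lemma~\ref{L:dmcvcdimension} bounds the VC dimension of the family $\mathcal{F}_{\epsilon,0}=\{F_{\epsilon,0}(i): i=1,\dots,M_{\epsilon,0}^{(n)}\}$ by $n(H_0(Y)+\epsilon')$ (simply because the family has at most $2^{n(H_0(Y)+\epsilon')}$ members), and then the Vapnik--Chervonenkis uniform-convergence theorem is applied on the range space $(\mathcal{X}^n,\mathcal{F}_{\epsilon,0})$ with sample size $2^{nR}$, taking both the accuracy and confidence parameters equal to $\Delta_\epsilon nR/2^{nR}$ with $\Delta_\epsilon=\max\{8\,\mbox{VC-d}(\mathcal{F}_{\epsilon,0}),16e\}$. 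The uniformity over $i$ then comes for free from the supremum built into the VC bound.

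Your Bernstein-plus-union-bound plan has a concrete gap in exactly the $R>I_0(X;Y)$ regime you flag as ``the most delicate piece.'' With $m=2^{nR}$ and $p=P_{\epsilon,0}(i)\doteq 2^{-nI_0(X;Y)}$, the count $N_{\epsilon,0}(i|\mathbf{C})$ is Binomial$(m,p)$ with mean $mp\doteq 2^{n(R-I_0(X;Y))}$ and standard deviation $\sqrt{mp(1-p)}\doteq 2^{n(R-I_0(X;Y))/2}$, which grows \emph{exponentially} in $n$ once $R>I_0(X;Y)$. Definition~\ref{D:TCB} demands $N_{\epsilon,0}(i|\mathcal{C})$ stay within the additive window $n^3R$ of $2^{nR}P_{\epsilon,0}(i)$, and that window is only polynomial in $n$, hence exponentially narrower than the standard deviation. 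Plugging $s=n^3R$ into Bernstein, $\Pr(|N-mp|>s)\le 2\exp\!\left(-\frac{s^2/2}{mp(1-p)+s/3}\right)$, gives an exponent $\doteq -n^6R^2/2^{n(R-I_0(X;Y))}\to 0$, so the bound is vacuous; and this is not a shortcoming of Bernstein in particular, since no concentration inequality can confine a sum inside a window exponentially narrower than its own standard deviation. The ``smallness of $P_{\epsilon,0}(i)$'' helps you only in the $R<I_0(X;Y)$ regime, where $mp\to 0$ and the count is essentially always zero; for $R>I_0(X;Y)$ it actively hurts, because it is precisely what makes the variance exponential. So the per-$i$ tail estimate you would need to feed into a union bound is not available, and the proposal cannot be completed along the stated lines; you would need to switch to the VC-theorem argument of the paper (or some other device that exploits the structure of $\mathcal{F}_{\epsilon,0}$ rather than treating the indices independently).
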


\begin{theorem}\label{T:rate}
Consider the conditional entropy of the channel output given the source's codebook information, namely $H(Y^n|\mathbf{C})$. We have
\begin{equation*}
\lim _{n \to \infty} \frac{1}{n} H(Y^n|\mathbf{C})=
\begin{cases}
H_{0}(Y)  \text{~~~~~~~~~~~~when $R>I_{0}(X;Y)$,}\\
R+H_0(Y|X)  \text{~~~when $R<I_{0}(X;Y)$,}
\end{cases}
\end{equation*}
where $H_0(Y)$, $I_0(X;Y)$ and $H_0(Y|X)$ are all calculated according to $p_0(x,y)=p_0(x)p(y|x)$.

In contrast, without the codebook information, we have
\begin{align*}
\lim_{n\to \infty}\frac{1}{n}H(Y^n)=H_0(Y) \text{ for any } R>0.
\end{align*}
\end{theorem}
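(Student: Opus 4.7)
The plan is to handle the three assertions of Theorem \ref{T:rate} separately, each reducing to tools already available in the paper, together with Shannon's classical channel-coding theorem. Throughout, let $\mathcal{T}_n$ denote the event that $\mathbf{C}$ is $\epsilon$-typical; by Theorem \ref{T:dmcwhp}, $\mbox{Pr}(\mathcal{T}_n)\to 1$.

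\emph{Case $R>I_0(X;Y)$.} Fix an $\epsilon$-typical realization $\mathcal{C}$. Theorem \ref{T:OdmcU} supplies some $\delta(\epsilon)\to 0$ as $\epsilon\to 0$ with
\begin{equation*}
2^{-n(H_0(Y)+\delta(\epsilon))}\leq\mbox{Pr}(Y^n=y_{\epsilon,0}^n(i)\,|\,\mathcal{C})\leq 2^{-n(H_0(Y)-\delta(\epsilon))}
\end{equation*}
uniformly in $i$, while a standard joint-AEP argument for a typical input passing through the channel gives $\mbox{Pr}(Y^n\in A_{\epsilon,0}^{(n)}(Y)\,|\,\mathcal{C})\geq 1-\delta_n$ with $\delta_n\to 0$. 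Splitting $H(Y^n|\mathcal{C})$ using the indicator of $\{Y^n\in A_{\epsilon,0}^{(n)}(Y)\}$ and bounding each atypical branch by $n\log|\mathcal{Y}|$ yields
\begin{equation*}
(1-\delta_n)n(H_0(Y)-\delta(\epsilon))\leq H(Y^n|\mathcal{C})\leq n(H_0(Y)+\delta(\epsilon))+\delta_n n\log|\mathcal{Y}|+h_b(\delta_n),
\end{equation*}
where $h_b$ is the binary entropy. Averaging over $\mathcal{C}\in\mathcal{T}_n$, using the crude bound $H(Y^n|\mathbf{C})\leq n\log|\mathcal{Y}|$ on $\mathcal{T}_n^c$, and letting $n\to\infty$ and then $\epsilon\to 0$ gives $\frac{1}{n}H(Y^n|\mathbf{C})\to H_0(Y)$.

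\emph{Case $R<I_0(X;Y)$.} Here $R$ lies below the capacity of the DMC loaded with input $p_0(x)$, so jointly-typical decoding of $W$ from $(Y^n,\mathbf{C})$ has $\E_{\mathbf{C}}[\bar P_e(\mathbf{C})]\to 0$ by Shannon's theorem, and Fano then yields $\E_{\mathbf{C}}[H(W|Y^n,\mathbf{C})]=o(n)$. I expand
\begin{equation*}
H(Y^n|\mathbf{C})=H(W|\mathbf{C})+H(Y^n|W,\mathbf{C})-H(W|Y^n,\mathbf{C}),
\end{equation*}
and use $H(W|\mathbf{C})=nR$ (since $W$ is uniform and independent of $\mathbf{C}$) together with $H(Y^n|W,\mathbf{C})=nH_0(Y|X)+o(n)$ --- the latter because, on $\mathcal{T}_n$, $X^n(W)$ is a deterministic typical sequence whose empirical distribution is within $O(\epsilon)$ of $p_0$, so the memoryless channel contributes $\sum_i H(p(\cdot\,|\,X_i(W)))\approx nH_0(Y|X)$. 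Combining these with the Fano bound delivers $\frac{1}{n}H(Y^n|\mathbf{C})\to R+H_0(Y|X)$.

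\emph{Unconditional entropy.} If the typicality restriction on $\mathbf{C}$ were dropped, each codeword would be i.i.d.\ $\sim p_0^n$ and $X^n(W)$ would have the same law, giving $Y^n$ i.i.d.\ $\sim p_0(y)^n$ and hence $H(Y^n)=nH_0(Y)$ exactly. Re-imposing the typicality conditioning perturbs the law of $Y^n$ in total variation by at most $2\,\mbox{Pr}(\mathcal{T}_n^c)=o(1)$, so the continuity estimate $|H(P)-H(Q)|\leq n\,\mathrm{TV}(P,Q)\log|\mathcal{Y}|+h_b(\mathrm{TV}(P,Q))$ gives $\frac{1}{n}H(Y^n)\to H_0(Y)$ for every $R>0$. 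The main bookkeeping obstacle I anticipate is in the first case, where Theorem \ref{T:OdmcU} is stated under joint codebook- and noise-typicality whereas $H(Y^n|\mathcal{C})$ averages over every noise realization; the resolution is precisely to absorb the atypical-noise contribution into the $\delta_n n\log|\mathcal{Y}|$ term above, which is justified by the joint-AEP estimate used to control $\mbox{Pr}(Y^n\notin A_{\epsilon,0}^{(n)}(Y)\,|\,\mathcal{C})$ uniformly on $\mathcal{T}_n$.
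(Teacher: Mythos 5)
Your case $R>I_0(X;Y)$ is essentially the paper's argument: condition on the typical-noise event, restrict to typical codebooks, apply Theorem~\ref{T:OdmcU} to the typical-output atoms, and mop up the atypical residue; the paper instead obtains the matching upper bound via $H(Y^n|\mathbf{C})\leq H(Y^n)$ and Lemma~\ref{L:JAEP}, but these are cosmetic variants of the same idea. Where you genuinely diverge is the case $R<I_0(X;Y)$. The paper expands $H(Y^n|\mathbf{C})=H(X^n|\mathbf{C})+H(Y^n|X^n,\mathbf{C})-H(X^n|Y^n,\mathbf{C})$ and must then establish $\frac{1}{n}H(X^n|\mathbf{C})\to R$ (Lemma~\ref{L:rlessc2}), which in turn requires introducing a notion of ``regular'' codebooks (no codeword appears with excessive multiplicity) and running a second Vapnik--Chervonenkis argument to show regularity holds with high probability. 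Your decomposition $H(Y^n|\mathbf{C})=H(W|\mathbf{C})+H(Y^n|W,\mathbf{C})-H(W|Y^n,\mathbf{C})$ sidesteps all of this: $H(W|\mathbf{C})=nR$ is exact because $W$ is uniform and independent of $\mathbf{C}$, so the only nontrivial terms are $H(W|Y^n,\mathbf{C})=o(n)$ by Fano plus channel coding (the same content as the paper's Lemma~\ref{L:rlessc1}) and $H(Y^n|W,\mathbf{C})=H(Y^n|X^n)$, which you evaluate directly from the strong typicality of the reserved codewords, namely $\sum_iH(p(\cdot|X_i(W)))=n\sum_a\frac{N(a|X^n)}{n}H(p(\cdot|a))=n(H_0(Y|X)+O(\epsilon))$. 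This buys a real economy: Lemma~\ref{L:rlessc2} and its regular-codebook machinery become unnecessary, and the argument is more transparent about why $nR$ appears. For the unconditional statement you also take a different (and shorter) route---total-variation continuity of entropy rather than the weak-AEP splitting in Lemma~\ref{L:JAEP}---which is valid, with one small slip: the relevant total-variation perturbation is $1-\mbox{Pr}(\tilde X^n\in A_{\epsilon,0}^{(n)}(X))$, governing the marginal law of $X^n(W)$ under the typicality reservation, not $\mbox{Pr}(\mathcal T_n^c)$ where $\mathcal T_n$ is codebook typicality; both are $o(1)$, so the conclusion stands, but the attribution should be corrected.
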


\section{AEP of Typical Output Sequences}\label{S:AEP}

Essentially, Theorem \ref{T:OdmcU} states that there exists an asymptotic equipartition property of the typical output sequences, irrespective of the specific codebook used, as long as the codebook is a typical codebook. To prove this theorem, we first introduce two lemmas.

\begin{lemma}\label{L:TransProb}
Let $E_{\epsilon}$ denote the event that the output $Y^n \in A_{\epsilon}^{(n)}(Y|x^n)$ for any given input $x^n$. For any $x^n \in F_{\epsilon, 0}(i)$,
\begin{align*}
&\mbox{Pr}(Y^{n}=y_{\epsilon,0}^{n}(i)|E_{\epsilon}, X^n=x^{n}) \geq 2^{-n(H_0(Y|X)+\epsilon_0)}\\
\text{and~~} &\mbox{Pr}(Y^{n}=y_{\epsilon,0}^{n}(i)|E_{\epsilon}, X^n=x^{n}) \leq 2^{-n(H_0(Y|X)-\epsilon_0)},
\end{align*}
where $H_0(Y|X)$ is calculated according to $p_0(x,y)=p_0(x)p(y|x)$ and $\epsilon_0$ goes to 0 as $\epsilon \to 0$ and $n \to \infty$.
\end{lemma}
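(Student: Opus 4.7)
The plan is to apply Bayes' rule to reduce the conditional probability to a ratio that can be attacked by strong-typicality calculations:
\[
\mbox{Pr}(Y^n = y_{\epsilon,0}^n(i)\mid E_\epsilon, X^n = x^n) = \frac{\mbox{Pr}(Y^n = y_{\epsilon,0}^n(i)\mid X^n = x^n)}{\mbox{Pr}(E_\epsilon\mid X^n = x^n)}.
\]
Here we use that, by the definition of $F_{\epsilon,0}(i)$, the target sequence $y_{\epsilon,0}^n(i)$ already lies in $A_\epsilon^{(n)}(Y|x^n)$, so conditionally on $X^n = x^n$ the event $\{Y^n = y_{\epsilon,0}^n(i)\}$ is contained in $E_\epsilon$; this is what removes the indicator from the numerator.

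For the numerator, I would invoke the memoryless channel factorization and rewrite the log-probability in terms of the joint type,
\[
\log \mbox{Pr}(Y^n = y_{\epsilon,0}^n(i)\mid X^n = x^n) = \sum_{(a,b):\, p(b|a)>0} N(a,b\mid x^n,y_{\epsilon,0}^n(i))\,\log p(b|a),
\]
where \dref{E:CT2} allows restriction to pairs with $p(b|a)>0$. Then \dref{E:CT1} replaces $N(a,b\mid x^n,y_{\epsilon,0}^n(i))$ by $p(b|a)N(a|x^n)$ up to an additive $n\epsilon(1+1/|\mathcal{Y}|)$, and the strong typicality of $x^n$ with respect to $p_0(x)$ replaces $N(a|x^n)/n$ by $p_0(a)$ up to $\epsilon/|\mathcal{X}|$. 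Substituting and collecting terms, the sum collapses to $-n\sum_{a,b}p_0(a)p(b|a)\log p(b|a) = -nH_0(Y|X)$, with an overall additive slack of order $n\epsilon$; choosing $\epsilon_0$ to absorb this slack together with $|\log p(b|a)|$ factors gives the two exponential bounds $2^{-n(H_0(Y|X)\pm \epsilon_0)}$ on the numerator.

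For the denominator, I would use the conditional AEP: for any $x^n \in A_{\epsilon,0}^{(n)}(X)$, the channel delivers an output in $A_\epsilon^{(n)}(Y|x^n)$ with probability approaching $1$ as $n\to\infty$, since within each input-letter class the empirical distribution of the output concentrates by the Law of Large Numbers. Consequently the denominator contributes a factor that differs from $1$ by a vanishing amount $\delta_n$, which can be folded into $\epsilon_0$ without disturbing the lower bound and only weakening the upper bound by a negligible exponential factor.

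The main bookkeeping obstacle is to confirm that the three sources of slack, namely the $\epsilon(1+1/|\mathcal{Y}|)$ from conditional typicality, the $\epsilon/|\mathcal{X}|$ from marginal typicality, and the $\delta_n \to 0$ from the conditional AEP in the denominator, can be simultaneously absorbed into a single $\epsilon_0$ that tends to zero as $\epsilon \to 0$ and $n \to \infty$. Beyond this careful accounting, the rest is a direct evaluation from the definitions in Section \ref{S:Pre}.
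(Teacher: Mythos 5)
Your proposal follows essentially the same route as the paper: both apply Bayes' rule to rewrite the conditional probability as $p(y_{\epsilon,0}^n(i)\mid x^n)/\mbox{Pr}(E_\epsilon\mid X^n=x^n)$, control the denominator by the conditional AEP (so it is $1-o(1)$), and estimate the numerator as $2^{-n(H_0(Y|X)\pm\epsilon_0)}$ from the joint typicality of $(x^n,y_{\epsilon,0}^n(i))$. The only cosmetic difference is that you bound the numerator by a direct type calculation using \dref{E:CT1}--\dref{E:CT2}, whereas the paper passes through weak typicality of the pair $(x^n,y_{\epsilon,0}^n(i))$ and subtracts $-\frac{1}{n}\log p(x^n)$ from $-\frac{1}{n}\log p(x^n,y_{\epsilon,0}^n(i))$; both are standard and yield the same bound.
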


\begin{proof}
By the definition of $F_{\epsilon, 0}(i)$, we have for any $x^n$ in $F_{\epsilon, 0}(i)$, $x^n \in A_{\epsilon,0}^{(n)}(X)$ and $y_{\epsilon,0}^{n}(i) \in A_{\epsilon}^{(n)}(Y|x^n)$. Then, it follows from the definition of strong typicality that $(x^n,y_{\epsilon,0}^{n}(i)) \in A_{\epsilon', 0}^{(n)}(X,Y)$, where $\epsilon' \to 0$ as $\epsilon \to 0$. Since strong typicality implies weak typicality, for any $x^n$ in $F_{\epsilon,0}(i)$, we have
\begin{align*}
\left|-\frac{1}{n}\log p(x^n) - H_0(X) \right| < \epsilon'',\\
\left|-\frac{1}{n}\log p(x^n,y_{\epsilon,0}^{n}(i)) - H_0(X,Y) \right| < \epsilon'',
\end{align*}
where $\epsilon'' \to 0$ as $\epsilon \to 0$. Thus,
$$\left|-\frac{1}{n}\log p(y_{\epsilon,0}^{n}(i)|x^n)-H_0(Y|X)  \right| < 2 \epsilon'',$$
and
\begin{equation*}
2^{-n(H_0(Y|X)+2\epsilon'')}      \leq p(y_{\epsilon,0}^{n}(i)|x^n) \leq 2^{-n(H_0(Y|X)-2\epsilon'')}.
\end{equation*}

Therefore, for any $x^n \in F_{\epsilon, 0}(i)$, we have
\begin{align}
\nonumber &\mbox{Pr}(Y^{n}=y_{\epsilon,0}^{n}(i)|E_{\epsilon}, X^n=x^{n})\\
\nonumber =&\frac{\mbox{Pr}(Y^{n}=y_{\epsilon,0}^{n}(i),E_{\epsilon}, X^n=x^{n})}{\mbox{Pr}(E_{\epsilon}, X^n=x^{n})}\\
\nonumber =&\frac{\mbox{Pr}(Y^{n}=y_{\epsilon,0}^{n}(i), X^n=x^{n})}{\mbox{Pr}(E_{\epsilon}| X^n=x^{n})\mbox{Pr}(X^n=x^{n})}\\
\nonumber =&\frac{p(y_{\epsilon,0}^{n}(i)|x^n)}{\mbox{Pr}(E_{\epsilon}|X^n=x^{n})}\\
\nonumber =&(1+o(1))p(y_{\epsilon,0}^{n}(i)|x^n)\\
\nonumber\leq&(1+o(1))2^{-n(H_0(Y|X)-2\epsilon'')} \\
\nonumber=& 2^{-n(H_0(Y|X)-\epsilon_0)},
\end{align}
where $\epsilon_0:=2\epsilon''+\frac{\log (1+o(1))}{n}$ and $\epsilon_0 \to 0$ as $\epsilon \to 0$ and $n \to \infty$. Similarly, for any $x^n \in F_{\epsilon, 0}(i)$, we have
\begin{align*}
&\mbox{Pr}(Y^{n}=y_{\epsilon,0}^{n}(i)|E_{\epsilon}, X^n=x^{n}) \nonumber\\
=&(1+o(1))p(y_{\epsilon,0}^{n}(i)|x^n) \\
\geq &2^{-n(H_0(Y|X)+2\epsilon''-\frac{\log (1+o(1))}{n})}\\
\nonumber\geq& 2^{-n(H_0(Y|X)+\epsilon_0)},
\end{align*}
which finishes the proof of Lemma \ref{L:TransProb}.
\end{proof}

\begin{lemma}\label{L:NinT}
If $\mathcal{C}$ is a typical codebook, then for any $i \in \{1,\ldots,M_{\epsilon,0}^{(n)}\}$,
\begin{align*}
&N_{\epsilon,0}(i| \mathcal{C})\geq 2^{nR} \cdot 2^{-n(I_0(X;Y)+\epsilon'_{0})} -n^{3}R\\
\text{and~~} &N_{\epsilon,0}(i| \mathcal{C})\leq 2^{nR} \cdot 2^{-n(I_0(X;Y)-\epsilon'_0)} + n^{3}R,
\end{align*}
where $I_0(X;Y)$ is calculated according to $p_0(x)p(y|x)$ and $\epsilon'_0$ goes to 0 as $\epsilon \to 0$ and $n \to \infty$.
\end{lemma}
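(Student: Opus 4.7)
The plan is to reduce the claim to an estimate of $P_{\epsilon,0}(i)$ via the typical-codebook definition, and then to show $P_{\epsilon,0}(i) \doteq 2^{-nI_0(X;Y)}$ by a standard strong-typicality counting argument. Concretely, since $\mathcal{C}$ is $\epsilon$-typical, Definition \ref{D:TCB} directly gives
$$2^{nR} P_{\epsilon,0}(i) - n^{3}R \;\leq\; N_{\epsilon,0}(i|\mathcal{C}) \;\leq\; 2^{nR} P_{\epsilon,0}(i) + n^{3}R,$$
so the $\pm n^{3}R$ additive terms come for free and I only need to sandwich $P_{\epsilon,0}(i)$ between $2^{-n(I_0(X;Y)\pm\epsilon'_0)}$.

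For that estimate, I would write
$$P_{\epsilon,0}(i) = \frac{\sum_{x^n \in F_{\epsilon,0}(i)} p_0(x^n)}{\mbox{Pr}(\tilde{X}^n \in A_{\epsilon,0}^{(n)}(X))},$$
where the denominator tends to $1$ (and is thus $\doteq 1$) by the strong AEP. The summand is evaluated by applying weak typicality to each $x^n \in F_{\epsilon,0}(i) \subseteq A_{\epsilon,0}^{(n)}(X)$ to get $p_0(x^n) \doteq 2^{-nH_0(X)}$, exactly as in the proof of Lemma \ref{L:TransProb}. What remains is to count $|F_{\epsilon,0}(i)|$.

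To count $F_{\epsilon,0}(i)$, I would reuse the observation from Lemma \ref{L:TransProb} that $x^n \in F_{\epsilon,0}(i)$ together with $y_{\epsilon,0}^{n}(i) \in A_{\epsilon,0}^{(n)}(Y)$ forces $(x^n, y_{\epsilon,0}^{n}(i)) \in A_{\epsilon',0}^{(n)}(X,Y)$ for some $\epsilon' \to 0$, and conversely any jointly $\epsilon''$-typical $x^n$ sits in $F_{\epsilon,0}(i)$ for appropriately matched $\epsilon''$. Then the standard size estimates for the conditional strongly typical set (Cover and Thomas, Ch.~13) give $|F_{\epsilon,0}(i)| \doteq 2^{nH_0(X|Y)}$. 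Multiplying, I obtain
$$P_{\epsilon,0}(i) \doteq 2^{nH_0(X|Y)} \cdot 2^{-nH_0(X)} = 2^{-nI_0(X;Y)},$$
which, substituted back into the typical-codebook inequality, yields both asserted bounds on $N_{\epsilon,0}(i|\mathcal{C})$ with $\epsilon'_0 \to 0$ as $\epsilon \to 0$ and $n \to \infty$.

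The main obstacle is the bookkeeping in the counting step: $F_{\epsilon,0}(i)$ is defined through conditional typicality of $y_{\epsilon,0}^{n}(i)$ given $x^n$ (an $x^n$-indexed condition), whereas the clean size bound $\doteq 2^{nH_0(X|Y)}$ for a fixed typical $y^n$ is usually stated for joint strong typicality. Translating between these formulations correctly, and tracking how the typicality parameters inflate so that the resulting $\epsilon'_0$ still tends to $0$, is the only subtle part; the rest is routine application of the strong AEP already used elsewhere in the paper.
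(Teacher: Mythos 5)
Your reduction to estimating $P_{\epsilon,0}(i)$ via Definition \ref{D:TCB} is exactly how the paper begins, and your bound $P_{\epsilon,0}(i) \doteq 2^{-nI_0(X;Y)}$ is the same sandwich the paper derives in (\ref{E:combinebound}). Where you diverge is in how you reach that sandwich. The paper treats $P_{\epsilon,0}(i)$ as a (conditioned) probability that $\tilde{X}^n$ falls in a set, sandwiches that set between two jointly-typical slices $A_{\epsilon,0}^{(n)}(X,Y)$ and $A_{\epsilon',0}^{(n)}(X,Y)$ with two different typicality parameters, and then applies Lemma~13.6.2 of Cover and Thomas directly as a black box to each side. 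You instead unfold the probability as $P_{\epsilon,0}(i) = \bigl(\sum_{x^n \in F_{\epsilon,0}(i)} p_0(x^n)\bigr)/\mbox{Pr}(\tilde{X}^n \in A_{\epsilon,0}^{(n)}(X))$ and separately bound the cardinality $|F_{\epsilon,0}(i)| \doteq 2^{nH_0(X|Y)}$ and the per-element probability $p_0(x^n) \doteq 2^{-nH_0(X)}$, which is in effect inlining the proof of Lemma~13.6.2. Both routes hinge on the same non-trivial technical step that you flag as ``the only subtle part'': the two-way translation between the $x^n$-indexed condition $y_{\epsilon,0}^{n}(i) \in A_{\epsilon}^{(n)}(Y|x^n)$ (plus $x^n \in A_{\epsilon,0}^{(n)}(X)$) and membership of the pair in a jointly strongly typical set with an inflated parameter. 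The paper carries this out explicitly in both directions (the converse direction, showing $A_{\epsilon,0}^{(n)}(X,Y) \ni (x^n, y_{\epsilon,0}^n(i)) \Rightarrow y_{\epsilon,0}^n(i) \in A_{\epsilon}^{(n)}(Y|x^n)$, is the more delicate one and takes the bulk of the proof), and you would need essentially the same computation to make your counting claim precise. In short: correct, and marginally more elementary in that it avoids citing the probabilistic form of the joint typicality lemma, but the underlying content is the same.
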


\begin{proof}
To prove Lemma \ref{L:NinT}, we need the following standard result on strong typicality (see Lemma 13.6.2 in \cite{Coverbook}):

Let $\tilde{X}^{n}$ be drawn i.i.d. according to $p_0(x)=\sum_{y}p_0(x,y)$. For $y^{n} \in A_{\epsilon,0}^{(n)}(Y)$,
\begin{align}
&\mbox{Pr}((\tilde{X}^{n}, y^{n}) \in A_{\epsilon,0}^{(n)}(X,Y)) \geq 2^{-n(I_0(X;Y)+\epsilon_1)}\\
\text{and~~} &\mbox{Pr}((\tilde{X}^{n}, y^{n}) \in A_{\epsilon,0}^{(n)}(X,Y)) \leq 2^{-n(I_0(X;Y)-\epsilon_1)},
\end{align}
where $I_0(X;Y)$ is calculated according to $p_0(x,y)$ and $\epsilon_1$ goes to 0 as $\epsilon \to 0$ and $n \to \infty$.

According to the definition of $P_{\epsilon,0}(i)$,
\begin{align*}
&P_{\epsilon,0}(i)=\mbox{Pr}(y_{\epsilon,0}^{n}(i) \in A_{\epsilon}^{(n)}(Y|\tilde{X}^n)|\tilde{X}^n \in A_{\epsilon,0}^{(n)}(X) ),
\end{align*}
where $\tilde{X}^{n}$ is drawn i.i.d. according to $p_0(x)$.

Since $\tilde{X}^n \in A_{\epsilon,0}^{(n)}(X)$ and $y_{\epsilon,0}^{n}(i) \in A_{\epsilon}^{(n)}(Y|\tilde{X}^n)$ imply that $(\tilde{X}^n,y_{\epsilon,0}^{n}(i)) \in A_{\epsilon',0}^{(n)}(X,Y)$, where $\epsilon'$ goes to 0 as $\epsilon \to 0$, we have
\begin{align}
\nonumber P_{\epsilon,0}(i)\leq & \mbox{Pr}((\tilde{X}^{n}, y_{\epsilon,0}^{n}(i)) \in A_{\epsilon',0}^{(n)}(X,Y)|\tilde{X}^{n} \in A_{\epsilon,0}^{(n)}(X) )\\
\nonumber =&\frac{\mbox{Pr}((\tilde{X}^{n}, y_{\epsilon,0}^{n}(i)) \in A_{\epsilon',0}^{(n)}(X,Y),\tilde{X}^{n} \in A_{\epsilon,0}^{(n)}(X) )}{\mbox{Pr}(\tilde{X}^{n} \in A_{\epsilon,0}^{(n)}(X))}\\
\nonumber\leq&(1+o(1))\mbox{Pr}((\tilde{X}^{n}, y_{\epsilon,0}^{n}(i)) \in A_{\epsilon',0}^{(n)}(X,Y))\\
\nonumber \leq&(1+o(1)) 2^{-n(I_0(X;Y)-\epsilon'_1)} \\
\nonumber= & 2^{-n(I_0(X;Y)-\epsilon'_1-\frac{\log (1+o(1))}{n})}\\
= & 2^{-n(I_0(X;Y)-\epsilon'_2)} \label{E:new1}
\end{align}
where $\epsilon'_2:=\epsilon'_1+\frac{\log (1+o(1))}{n}$ and $\epsilon'_2 \to 0$ as $\epsilon \to 0$ and $n \to \infty$.

Furthermore, by the standard definitions of strong typicality, it follows that  $(x^n,y_{\epsilon,0}^{n}(i)) \in A_{\epsilon,0}^{(n)}(X,Y)$ implies $x^n \in A_{\epsilon,0}^{(n)}(X)$. Now, we show $(x^n,y_{\epsilon,0}^{n}(i)) \in A_{\epsilon,0}^{(n)}(X,Y)$ also implies $y_{\epsilon,0}^{n}(i) \in A_{\epsilon}^{(n)}(Y|x^n)$. Suppose $(x^n,y_{\epsilon,0}^{n}(i)) \in A_{\epsilon,0}^{(n)}(X,Y)$. Then, we have
\begin{enumerate}
  \item For all $(a,b)\in \mathcal{X}   \times \mathcal{Y}$ with $p(b|a)=0$, $p_0(a,b)=0$ and $N(a,b|x^n,y_{\epsilon,0}^{n}(i))=0$.
  \item For all $(a,b)\in \mathcal{X}   \times \mathcal{Y}$ with $p(b|a)>0$ and $p_0(a)=0$, $p_0(a,b)=0$ and $N(a,b|x^n,y_{\epsilon,0}^{n}(i))=0$, as well as $N(a|x^n)=0$.
  \item For all $(a,b)\in \mathcal{X}   \times \mathcal{Y}$ with $p(b|a)>0$ and $p_0(a)>0$, $p_0(a,b)>0$ and
             $$\left|\frac{1}{n}N(a|x^n)-p_0(a) \right| < \frac{\epsilon}{|\mathcal{X}|},$$
             $$\left|\frac{1}{n}N(a,b|x^n,y_{\epsilon,0}^{n}(i))-p_0(a,b) \right| < \frac{\epsilon}{|\mathcal{X}||\mathcal{Y}|}.$$
  Thus,
  \begin{align*}
  &\left|\frac{1}{n}N(a,b|x^n,y_{\epsilon,0}^{n}(i))-\frac{1}{n}N(a|x^n)p(b|a) \right|\\
  <&p_0(a,b)+\frac{\epsilon}{|\mathcal{X}||\mathcal{Y}|}-p(b|a)(p_0(a)-\frac{\epsilon}{|\mathcal{X}|})\\
  =&\frac{\epsilon}{|\mathcal{X}||\mathcal{Y}|}+p(b|a)\frac{\epsilon}{|\mathcal{X}|}\\
  \leq&\frac{\epsilon}{|\mathcal{Y}|}+ \epsilon\\
  =& \epsilon(1+\frac{1}{|\mathcal{Y}|}).
  \end{align*}
\end{enumerate}
Therefore, $(x^n,y_{\epsilon,0}^{n}(i)) \in A_{\epsilon,0}^{(n)}(X,Y)$ implies that $y_{\epsilon,0}^{n}(i) \in A_{\epsilon}^{(n)}(Y|x^n)$, as well as $x^n \in A_{\epsilon,0}^{(n)}(X)$.

Then, we have
\begin{align}
\nonumber P_{\epsilon,0}(i)=&\mbox{Pr}(y_{\epsilon,0}^{n}(i) \in A_{\epsilon}^{(n)}(Y|\tilde{X}^n)|\tilde{X}^n \in A_{\epsilon,0}^{(n)}(X) )\\
\nonumber\geq&\mbox{Pr}((\tilde{X}^{n}, y_{\epsilon,0}^{n}(i)) \in A_{\epsilon,0}^{(n)}(X,Y)|\tilde{X}^{n} \in A_{\epsilon,0}^{(n)}(X) )\\
\nonumber =&\frac{\mbox{Pr}((\tilde{X}^{n}, y_{\epsilon,0}^{n}(i)) \in A_{\epsilon,0}^{(n)}(X,Y),\tilde{X}^{n} \in A_{\epsilon,0}^{(n)}(X) )}{\mbox{Pr}(\tilde{X}^{n} \in A_{\epsilon,0}^{(n)}(X))}\\
\nonumber=&(1+o(1))\mbox{Pr}((\tilde{X}^{n}, y_{\epsilon,0}^{n}(i)) \in A_{\epsilon,0}^{(n)}(X,Y))\\
\nonumber\geq&(1+o(1)) 2^{-n(I_0(X;Y)+\epsilon_1)}\\
\nonumber=& 2^{-n(I_0(X;Y)+\epsilon_1-\frac{\log (1+o(1))}{n})}\\
= & 2^{-n(I_0(X;Y)+\epsilon_{2})}\label{E:new2}
\end{align}
where $\epsilon_{2}:=\epsilon_1-\frac{\log (1+o(1))}{n}$ and $\epsilon_2 \to 0$ as $\epsilon \to 0$ and $n \to \infty$.

Let $\epsilon'_0=\max\{\epsilon_{2},\epsilon'_{2}\}$. Combining (\ref{E:new1}) and (\ref{E:new2}), we have
\begin{align}
2^{-n(I_0(X;Y)+\epsilon'_0)}     \leq P_{\epsilon,0}(i) \leq 2^{-n(I_0(X;Y)-\epsilon'_0)}.\label{E:combinebound}
\end{align}

Therefore, if $\mathcal{C}$ is a typical codebook, by the definition of the typical codebooks and (\ref{E:combinebound}), for any $i \in \{1,\ldots,M_{\epsilon,0}^{(n)}\}$,
\begin{align*}
&N_{\epsilon,0}(i| \mathcal{C})\geq 2^{nR} \cdot 2^{-n(I_0(X;Y)+\epsilon'_{0})} -n^{3}R\\
\text{and~~} &N_{\epsilon,0}(i| \mathcal{C})\leq 2^{nR} \cdot 2^{-n(I_0(X;Y)-\epsilon'_0)} + n^{3}R,
\end{align*}
where $I_0(X;Y)$ is calculated according to $p_0(x)p(y|x)$ and $\epsilon'_0$ goes to 0 as $\epsilon \to 0$ and $n \to \infty$.
\end{proof}

\begin{proof}[Proof of Theorem \ref{T:OdmcU}]
Let $E_{\epsilon}$ denote the event $Y^n \in A_{\epsilon}^{(n)}(Y|x^n)$ for any given input $x^n$. Consider $\mbox{Pr}(Y^{n}=y_{\epsilon,0}^{n}(i)|E_{\epsilon}, \mathcal{C} \text{ is typical})$ for any $i \in \{1,\ldots,M_{\epsilon,0}^{(n)} \}$. We lower bound this probability as follows:
\begin{align}
\nonumber & \mbox{Pr}(Y^{n}=y_{\epsilon,0}^{n}(i)|E_{\epsilon}, \mathcal{C}\text{ is typical})\\
\nonumber  \stackrel {}{=} & \sum_{w=1}^{2^{nR}}\mbox{Pr}(Y^{n}=y_{\epsilon,0}^{n}(i)|E_{\epsilon}, \mathcal{C}\text{ is typical},X^n=x^{n}(w))\\
 & \cdot \mbox{Pr}(X^n=x^{n}(w)|E_{\epsilon}, \mathcal{C}\text{ is typical})\label{(a)}\\
 \stackrel {}{=} & \frac{1}{2^{nR}}\sum_{w=1}^{2^{nR}}\mbox{Pr}(Y^{n}=y_{\epsilon,0}^{n}(i)|E_{\epsilon}, \mathcal{C}\text{ is typical},X^n=x^{n}(w))\label{(b)}\\
 \stackrel {}{=} &\frac{1}{2^{nR}}\sum_{x^n(w)\in F_{\epsilon,0}(i)} \mbox{Pr}(Y^{n}=y_{\epsilon,0}^{n}(i)|E_{\epsilon}, \mathcal{C}\text{ is typical},X^n=x^{n}(w)) \label{E:ctn}\\
 \stackrel {}{\geq} &\frac{1}{2^{nR}}  N_{\epsilon,0}(i| \mathcal{C}) \cdot 2^{-n(H_0(Y|X)+\epsilon_0)} \label{(d)} \\
\stackrel {}{\geq} &\frac{1}{2^{nR}} (2^{nR} \cdot 2^{-n(I_0(X;Y)+\epsilon'_0)} -n^{3}R)  \cdot 2^{-n(H_0(Y|X)+\epsilon_0)} \label{(e)} \\
\nonumber \stackrel {}{=} &2^{-n(H_0(Y)+\epsilon_0+\epsilon'_0)}\cdot \left[1 - \frac{n^{3}R}{2^{nR}} \cdot 2^{n(I_0(X;Y)+\epsilon'_0)} \right].
\end{align}

(\ref{(a)}) follows from the Law of Total Probability and accumulates the contributions from all the codewords in the codebook to the probability for $y_{\epsilon,0}^{n}(i)$ to be channel output.

(\ref{(b)}) follows from the uniform distribution of message index $W$.

(\ref{E:ctn}) follows from the the condition $E_{\epsilon}$ and the fact that $\mathcal{C}$ contains only strongly typical codewords.

(\ref{(d)}) follows from Lemma \ref{L:TransProb}.

(\ref{(e)}) follows from Lemma \ref{L:NinT}.

Let $\epsilon \to 0$ as $n \to \infty$. Then for any $i \in \{1,\ldots,M_{\epsilon,0}^{(n)}\}$,
\begin{align}
&\mbox{Pr}(Y^{n}=y_{\epsilon}^{n}(i)|E_{\epsilon},  \mathcal{C} \text{ is typical})\stackrel {.}{\geq}2^{-nH_0(Y)},\label{E:Lower}
\end{align}
when $R>I_0(X;Y)$.

Similarly, following (\ref{E:ctn}), by Lemmas \ref{L:TransProb} and \ref{L:NinT}, we have
\begin{align*}
& \mbox{Pr}(Y^{n}=y_{\epsilon,0}^{n}(i)|E_{\epsilon}, \mathcal{C}\text{ is typical})\\
\stackrel {}{\leq} &\frac{1}{2^{nR}}  N_{\epsilon,0}(i| \mathcal{C}) \cdot 2^{-n(H_0(Y|X)-\epsilon_0)}  \\
\stackrel {}{\leq} &\frac{1}{2^{nR}} (2^{nR} \cdot 2^{-n(I_0(X;Y)-\epsilon'_0)} +n^{3}R)  \cdot 2^{-n(H_0(Y|X)-\epsilon_0)}  \\
\stackrel {}{=} &2^{-n(H_0(Y)-\epsilon_0-\epsilon'_0)}\cdot \left[1 + \frac{n^{3}R}{2^{nR}} \cdot 2^{n(I_0(X;Y)-\epsilon'_0)} \right].
\end{align*}
Therefore, for any $i \in \{1,\ldots,M_{\epsilon,0}^{(n)}\}$,
\begin{align}
&\mbox{Pr}(Y^{n}=y_{\epsilon,0}^{n}(i)|E_{\epsilon},  \mathcal{C} \text{ is typical})\stackrel {.}{\leq}2^{-nH_0(Y)},\label{E:Upper}
\end{align}
when $R>I_0(X;Y)$.
Combining (\ref{E:Lower}) and (\ref{E:Upper}), we establish Theorem \ref{T:OdmcU}.
\end{proof}

\section{The Probability that A Typical Codebook Appears}\label{S:WHP}

In this section, we will show that with high probability, a typical codebook will be generated by the random codebook generation. We begin with some relevant definitions and the Vapnik-Chervonenkis Theorem \cite{vctheorem1}, \cite{vctheorem2}:

A Range Space is a pair $(X,\mathcal{F})$, where $X$ is a set and $\mathcal{F}$ is a family of subsets of $X$. For any $A \subseteq X$, we define $P_{\mathcal{F}}(A)$, the projection of $\mathcal{F}$ on $A$, as $\{F \cap A: F \in \mathcal{F}\}$. We say that $A$ is \emph{shattered} by $\mathcal{F}$ if $P_{\mathcal{F}}(A) = 2^{A}$, i.e., if the projection of $\mathcal{F}$ on $A$ includes all possible subsets of $A$. The VC-dimension of $\mathcal{F}$, denoted by VC-d($\mathcal{F}$) is the cardinality of the largest set $A$ that $\mathcal{F}$ shatters. If arbitrarily large finite sets are shattered, the VC dimension of $\mathcal{F}$ is infinite.

\emph{The Vapnik-Chervonenkis Theorem:}
If $\mathcal{F}$ is a set of finite VC-dimension and $\{Y_{j}\}$ is a sequence of $n$ i.i.d. random variables with common probability distribution $P$, then for every $\epsilon$, $ \delta> 0$
\begin{equation} \label{E:vctheorem1}
\mbox{Pr} \left\{ \sup_{F \in \mathcal{F}} \left| \frac{1}{n} \sum_{j=1}^{n} \mathbb{I}(Y_{j}\in F) - P(F) \right| \leq \epsilon  \right\} >1-\delta
\end{equation}
whenever
\begin{equation}\label{E:vctheorem2}
n>\max\left\{\frac{\text{8VC-d}(\mathcal{F})}{\epsilon}\log_{2}\frac{16e}{\epsilon}  , \frac{4}{\epsilon}\log_{2}\frac{2}{\delta}       \right  \}.
\end{equation}

Let $\mathcal{F}_{\epsilon,0}=\{F_{\epsilon,0}(i), i=1,\ldots,M_{\epsilon,0}^{(n)} \}$. To show Theorem \ref{T:dmcwhp}, a finite VC dimension of $\mathcal{F}_{\epsilon,0}$ is desired in order to employ the Vapnik-Chervonenkis Theorem. For this reason, we introduce Lemma \ref{L:dmcvcdimension}.

\begin{lemma}\label{L:dmcvcdimension}
For a fixed block length $n$, VC-d$(\mathcal{F}_{\epsilon,0}) \leq n(H_0(Y)+\epsilon ')$, where $\epsilon ' \to 0$ as $\epsilon \to 0$.
\end{lemma}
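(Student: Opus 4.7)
The proof is essentially a one-line pigeonhole argument once one recalls the standard upper bound on the size of a strongly typical set. The plan is to combine the trivial observation that the VC dimension of a finite family is at most the base-$2$ logarithm of its cardinality with the classical cardinality estimate $|A_{\epsilon,0}^{(n)}(Y)|\le 2^{n(H_0(Y)+\epsilon')}$, where $\epsilon'\to 0$ as $\epsilon\to 0$.

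First I would note that $\mathcal{F}_{\epsilon,0}$ is indexed by $i=1,\ldots,M_{\epsilon,0}^{(n)}$, so $|\mathcal{F}_{\epsilon,0}|\le M_{\epsilon,0}^{(n)} = |A_{\epsilon,0}^{(n)}(Y)|$. Next I would invoke the standard bound on the size of the strongly typical set (see, e.g., Lemma 13.6.1 in \cite{Coverbook}): for some $\epsilon'>0$ with $\epsilon'\to 0$ as $\epsilon\to 0$,
\begin{equation*}
M_{\epsilon,0}^{(n)} \le 2^{n(H_0(Y)+\epsilon')}.
\end{equation*}

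Then I would run the generic VC-dimension bound: suppose $A\subseteq A_{\epsilon,0}^{(n)}(X)$ is shattered by $\mathcal{F}_{\epsilon,0}$. By definition of shattering, the projection $P_{\mathcal{F}_{\epsilon,0}}(A)$ contains all $2^{|A|}$ subsets of $A$. But every element of $P_{\mathcal{F}_{\epsilon,0}}(A)$ is obtained as $F_{\epsilon,0}(i)\cap A$ for some $i$, hence
\begin{equation*}
2^{|A|} = |P_{\mathcal{F}_{\epsilon,0}}(A)| \le |\mathcal{F}_{\epsilon,0}| \le 2^{n(H_0(Y)+\epsilon')}.
\end{equation*}
Taking logarithms yields $|A|\le n(H_0(Y)+\epsilon')$, which is exactly the claimed bound on the VC dimension.

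There is no real obstacle here; the only ingredient beyond the definition is the cardinality bound on $A_{\epsilon,0}^{(n)}(Y)$, which is standard. The one subtle point worth flagging is that this bound on VC-d$(\mathcal{F}_{\epsilon,0})$ grows linearly in the block length $n$, so when it is fed into the Vapnik-Chervonenkis theorem in the next stage of the argument, the sample-size condition \dref{E:vctheorem2} will force $2^{nR}$ (the number of codewords) to dominate a polynomial in $n$ times $n\cdot(H_0(Y)+\epsilon')/\epsilon$, which is comfortably satisfied for any $R>0$ and large $n$; this is what ultimately makes the VC approach viable for proving Theorem \ref{T:dmcwhp}.
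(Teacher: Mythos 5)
Your proof is correct and takes essentially the same approach as the paper's: both bound $|\mathcal{F}_{\epsilon,0}|=M_{\epsilon,0}^{(n)}\leq 2^{n(H_0(Y)+\epsilon')}$ via the AEP cardinality estimate for the strongly typical set, and then apply the elementary fact that a family of cardinality $K$ cannot shatter a set of size exceeding $\log_2 K$. You merely spell out the pigeonhole step ($2^{|A|}=|P_{\mathcal{F}_{\epsilon,0}}(A)|\le|\mathcal{F}_{\epsilon,0}|$) a bit more explicitly than the paper does.
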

\begin{proof}\label{P:vcdimension}
By the Asymptotic Equipartition Property, $|\mathcal{F}_{\epsilon,0}|=M_{\epsilon,0}^{(n)}\leq  2^{n(H_0(Y)+\epsilon')}$, where $\epsilon' \to 0$ as $\epsilon \to 0$. Thus, for any $A \subseteq \mathcal{X}^{n}$,
$$|\{F_{\epsilon,0}(i)\cap A: F_{\epsilon,0}(i) \in \mathcal{F}_{\epsilon,0}\}| \leq 2^{n(H_0(Y)+\epsilon ')},$$
and hence VC-d$(\mathcal{F}_{\epsilon,0}) \leq n(H_0(Y)+\epsilon ')$.
\end{proof}

\begin{proof}[Proof of Theorem \ref{T:dmcwhp}]\label{P:dmcwhp}
Since we reserve only the $\epsilon$-strongly typical codewords when generating the codebook, for any random codebook, the first condition in Definition \ref{D:TCB} is obviously satisfied. Below, we focus on showing that a random codebook satisfies the second condition in Definition \ref{D:TCB} with high probability.

For the given $p_0(x)$, consider all the codewords in a random codebook, $X^{n}(w)$, $w =1,\ldots,2^{nR}$. They are generated with the common distribution $p(x^n)=\mbox{Pr}(\tilde{X}^{n} = x^n|\tilde{X}^n \in A_{\epsilon,0}^{(n)}(X))$, where $\tilde{X}^{n}$ is drawn i.i.d. according to $p_0(x)$. Since VC-d$(\mathcal{F}_{\epsilon,0})$ is finite for a fixed $n$, we employ the Vapnik-Chervonenkis Theorem under the range space $(\mathcal{X}^{n},\mathcal{F}_{\epsilon,0})$. To satisfy (\ref{E:vctheorem2}), let both $\epsilon$ and $\delta$ in (\ref{E:vctheorem1}) be $\frac{\Delta_{\epsilon}nR}{2^{nR}}$, where $\Delta_{\epsilon}:=\max \{ 8\text{VC-d}(\mathcal{F}_{\epsilon,0}), 16e  \}.$ Then the Vapnik-Chervonenkis Theorem states that
\begin{align}
\nonumber &\mbox{Pr}\left\{\sup_{F_{\epsilon,0}(i) \in \mathcal{F}_{\epsilon,0}}  \left| \frac{N_{\epsilon,0}(i|{\bf C})}{2^{nR}} - P_{\epsilon,0}(i) \right|   \leq \frac{\Delta_{\epsilon}nR}{2^{nR}}     \right\}\\
\nonumber \geq &1-\frac{\Delta_{\epsilon}nR}{2^{nR}}\\
\to & 1 \ \text{as} \ n \to \infty , \label{E:O11}
\end{align}
where $N_{\epsilon,0}(i|{\bf C})=\sum_{w=1}^{2^{nR}}\mathbb{I}(X^{n}(w)\in F_{\epsilon,0}(i))$. Since $\frac{n^3 R}{2^{nR}}  \geq \frac{\Delta_{\epsilon}nR}{2^{nR}} $ for sufficiently large $n$, (\ref{E:O11}) concludes the proof of Theorem \ref{T:dmcwhp}.
\end{proof}

\section{Proof of Theorem \ref{T:rate}}\label{S:proofofrate}

Before proceeding to the proof of Theorem \ref{T:rate}, we first introduce Lemma \ref{L:JAEP}, which will facilitate the later discussions. The proof of Lemma \ref{L:JAEP} is given in Appendix \ref{A:A}.

\begin{lemma}\label{L:JAEP}
For the channel $(\mathcal{X},p(y|x),\mathcal{Y})$, generate the codebook at random according to $p_0(x)$ and reserve only the $\epsilon$-strongly typical codewords. The channel input and output $X^n$ and $Y^n$ satisfy that

\begin{enumerate}
   \item $\mbox{Pr}((X^n,Y^n) \in A_{\epsilon,0}^{(n)}(X,Y))\to 1 \ \text{as} \ n \to \infty$, for any $\epsilon >0$;
   \item $\lim _{n \to \infty} \frac{1}{n} H(X^n)=H_0(X)$, $\lim _{n \to \infty} \frac{1}{n} H(Y^n)=H_0(Y)$, and $\lim _{n \to \infty} \frac{1}{n} H(X^n,Y^n)=H_0(X,Y)$.
\end{enumerate}

\end{lemma}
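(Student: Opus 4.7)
My plan is to reduce every claim to a statement about conditioning on the high-probability event $\{\tilde{X}^n \in A_{\epsilon,0}^{(n)}(X)\}$. Let $P := \mbox{Pr}(\tilde{X}^n \in A_{\epsilon,0}^{(n)}(X))$, which tends to $1$ by the standard AEP for i.i.d.\ $p_0(x)$. After discarding non-typical codewords and transmitting a uniformly chosen message, the marginal law of the transmitted codeword is $q(x^n) = p_0^n(x^n)/P$ on $A_{\epsilon,0}^{(n)}(X)$ and $0$ elsewhere, and the input-output joint law is $q(x^n)\,p^n(y^n|x^n)$.

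For Part 1, I rewrite $\mbox{Pr}((X^n,Y^n) \in A_{\epsilon,0}^{(n)}(X,Y))$ as the conditional probability $\mbox{Pr}((\tilde{X}^n,\tilde{Y}^n) \in A_{\epsilon,0}^{(n)}(X,Y)\,|\,\tilde{X}^n \in A_{\epsilon,0}^{(n)}(X))$, where $\tilde{Y}^n$ is the channel output for input $\tilde{X}^n$. By the ordinary joint AEP the unconditional joint-typicality probability tends to $1$, and $P\to 1$, so the conditional probability does as well.

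For Part 2 I handle the three limits separately. For $\frac{1}{n}H(X^n)\to H_0(X)$: since $X^n$ is supported on strongly typical sequences, $-\frac{1}{n}\log p_0^n(x^n) = -\sum_a \frac{N(a|x^n)}{n}\log p_0(a)$ lies within $O(\epsilon)$ of $H_0(X)$ pointwise on that support, so $-\frac{1}{n}\log q(x^n)$ is within $O(\epsilon)+\frac{1}{n}\log(1/P)$ of $H_0(X)$; taking expectation and letting $\epsilon\to 0$ gives the claim. For $\frac{1}{n}H(Y^n|X^n)\to H_0(Y|X)$: conditional on any typical $x^n$ the channel output is memoryless, so $\frac{1}{n}H(Y^n|x^n) = \sum_a \frac{N(a|x^n)}{n}H(Y|X=a)$, which is $O(\epsilon)$-close to $H_0(Y|X)$ by strong typicality; averaging over $x^n$ preserves the bound. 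The chain rule $H(X^n,Y^n)=H(X^n)+H(Y^n|X^n)$ then delivers $\frac{1}{n}H(X^n,Y^n)\to H_0(X,Y)$.

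The remaining limit $\frac{1}{n}H(Y^n)\to H_0(Y)$ is the main obstacle, because $Y^n$ is not i.i.d.: the restriction to a typical codebook couples its coordinates, so the direct AEP for $p_0(y)$ does not apply. The upper bound is routine from splitting $H(Y^n)$ at the event $\{Y^n\in A_{\epsilon,0}^{(n)}(Y)\}$, which has probability $\to 1$ by Part 1 combined with the fact that joint strong typicality implies marginal strong typicality, yielding $\frac{1}{n}H(Y^n)\leq H_0(Y)+O(\epsilon)+o(1)$. For the lower bound I bypass the coordinate coupling via the pointwise sandwich
$$ p(y^n) \;=\; \sum_{x^n} q(x^n)\,p^n(y^n|x^n) \;\leq\; \frac{1}{P}\sum_{x^n} p_0^n(x^n)\,p^n(y^n|x^n) \;=\; \frac{p_0^n(y^n)}{P}, $$
so that $-\log p(Y^n) \geq -\log p_0^n(Y^n) - \log(1/P)$. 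On the high-probability event $\{Y^n\in A_{\epsilon,0}^{(n)}(Y)\}$, weak typicality (which is implied by strong typicality) gives $-\log p_0^n(Y^n) \geq n(H_0(Y)-O(\epsilon))$; contributions to $-p(y^n)\log p(y^n)$ from outside this event are non-negative and thus only help. Taking expectations yields $\frac{1}{n}H(Y^n)\geq H_0(Y)-O(\epsilon)-o(1)$, transferring the entropy estimate back to the ordinary AEP for the i.i.d.\ $p_0(y)$ law.
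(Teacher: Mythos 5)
Your proof is correct and reaches the same conclusions as the paper, but the route through Part 2 is genuinely different. The paper treats $H(X^n)$, $H(Y^n)$, and $H(X^n,Y^n)$ uniformly by one template: split each entropy sum at the relevant weakly typical set, estimate the typical contribution via weak typicality with respect to $p_0$, and tame the atypical tail with the log-sum inequality. You instead exploit the explicit law $q(x^n)=p_0^n(x^n)/P$ (where $p_0^n$ denotes the product law) on the strongly typical support: a pointwise estimate gives $\frac{1}{n} H(X^n)\to H_0(X)$, the per-letter channel structure gives $\frac{1}{n} H(Y^n|X^n)\to H_0(Y|X)$ via $\frac{1}{n} H(Y^n|x^n)=\sum_a\frac{N(a|x^n)}{n}H(Y|X=a)$, and the chain rule then delivers the joint limit with no atypical-tail bookkeeping at all. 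Only $H(Y^n)$ genuinely needs the typical/atypical split, since averaging $q(x^n)p^n(y^n|x^n)$ over $x^n$ destroys the product form of the output law; there your pointwise sandwich $p(y^n)\le p_0^n(y^n)/P$ is a real improvement over the paper's argument, which asserts $2^{-n(H_0(Y)+\epsilon)}\le p(y^n)\le 2^{-n(H_0(Y)-\epsilon)}$ for weakly typical $y^n$ as though $p(y^n)$ were the i.i.d.\ law $p_0^n(y^n)$; your inequality rigorously supplies the upper half and hence the lower bound on $H(Y^n)$. One caution: a matching pointwise \emph{lower} bound on $p(y^n)$ is not available, so the ``routine'' upper bound on $H(Y^n)$ should not be argued termwise via $\log(1/p(y^n))$; the clean version is the cardinality bound $H(Y^n)\le H(Y^n,T)\le 1+\mbox{Pr}(T=1)\log|A_{\epsilon,0}^{(n)}(Y)|+\mbox{Pr}(T=0)\,n\log|\mathcal{Y}|$ with $T=\mathbb{I}(Y^n\in A_{\epsilon,0}^{(n)}(Y))$, which needs only $\mbox{Pr}(T=1)\to 1$, and that is supplied by your Part 1. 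With that clarification your proof is fully rigorous and in places tidier than the paper's.
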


\begin{remark}
Since we reserve only the $\epsilon$-typical codewords when generating the codebook, generally, the channel input $X^n$ is no longer an i.i.d. random process. However, Lemma \ref{L:JAEP} essentially states that the random process $(X^n,Y^n)$ still satisfies the joint asymptotic equipartition property and furthermore, the entropy rates of the random processes $X^n$, $Y^n$ and $(X^n,Y^n)$ can still be simply expressed in the single letter form respectively. This observation will facilitate our later discussions.
\end{remark}

\begin{proof}[Proof of Theorem \ref{T:rate}]
We prove Theorem \ref{T:rate} by characterizing $\lim _{n \to \infty} \frac{1}{n} H(Y^n|\mathbf{C})$ in two different cases: when $R>I_0(X;Y)$ and when $R<I_0(X;Y)$, respectively.

\subsection{When $R>I_0(X;Y)$}

Define an indicator random variable $E$ as
\begin{align*}
E:=\mathbb{I}(E_{\epsilon}),
\end{align*}
where $E_{\epsilon}$ denotes the event $Y^n \in A_{\epsilon}^{(n)}(Y|x^n)$ for any given input $x^n$.

When $R>I_{0}(X;Y)$, we have
\begin{align}
\nonumber &H(Y^n|\mathbf{C})\\
\geq & H(Y^n|E,\mathbf{C})\label{(f)}\\
\nonumber =&\mbox{Pr}(E=1)H(Y^n|E=1,\mathbf{C}) + \mbox{Pr}(E=0)H(Y^n|E=0,\mathbf{C})\\
\nonumber \geq & \mbox{Pr}(E=1) \cdot H(Y^n|E=1,\mathbf{C})\\
\stackrel {}{=} & (1-o(1)) \cdot H(Y^n|E=1,\mathbf{C})\label{(g)}\\
\nonumber=& (1-o(1)) \cdot \sum_{\mathcal{C}}p(\mathcal{C}) \cdot    H(Y^n|E=1,\mathbf{C}=\mathcal{C})\\
\nonumber\geq & (1-o(1)) \cdot  \sum_{\mathcal{C} \text{ is typical}} p(\mathcal{C}) \cdot    H(Y^n|E=1,\mathbf{C}=\mathcal{C} )\\
\nonumber\stackrel { }{=}&(1-o(1)) \cdot \sum_{\mathcal{C}\text{ is typical}} p(\mathcal{C}) \cdot \left(\sum_{y^n}p(y^n|E_{\epsilon},\mathcal{C}) \log \frac{1}{p(y^n|E_{\epsilon},\mathcal{C})}        \right)\\
\nonumber \geq & (1-o(1))\cdot\sum_{\mathcal{C}\text{ is typical}} p(\mathcal{C}) \cdot \left(\sum_{y^n \in A_{\epsilon,0}^{(n)}(Y)}p(y^n|E_{\epsilon},\mathcal{C}) \log \frac{1}{p(y^n|E_{\epsilon},\mathcal{C})}   \right)\\
\stackrel {}{\geq} & (1-o(1)) \cdot \sum_{\mathcal{C}\text{ is typical}} p(\mathcal{C}) \cdot \left(\sum_{y^n \in A_{\epsilon,0}^{(n)}(Y)}p(y^n|E_{\epsilon},\mathcal{C}) \log 2^{n[H_0(Y)-\epsilon^*]}\right) \label{(h)}  \\
\nonumber =& n[H_0(Y)-\epsilon^*]\cdot(1-o(1)) \cdot \sum_{\mathcal{C}\text{ is typical}} p(\mathcal{C}) \cdot \left(\sum_{y^n \in A_{\epsilon,0}^{(n)}(Y)}p(y^n|E_{\epsilon},\mathcal{C})  \right)   \\
\nonumber =& n[H_0(Y)-\epsilon^*]\cdot(1-o(1)) \cdot \sum_{\mathcal{C}\text{ is typical}} p(\mathcal{C}) \cdot \mbox{Pr}(Y^n\in A_{\epsilon,0}^{(n)}(Y)| E_{\epsilon},\mathcal{C}) \\
\nonumber =& n[H_0(Y)-\epsilon^*]\cdot(1-o(1)) \cdot \sum_{\mathcal{C}\text{ is typical}} p(\mathcal{C}|E_{\epsilon}) \cdot \mbox{Pr}(Y^n\in A_{\epsilon,0}^{(n)}(Y)| E_{\epsilon},\mathcal{C}) \\
\nonumber \stackrel {}{=} & n[H_0(Y)-\epsilon^*]\cdot(1-o(1)) \cdot \mbox{Pr}(Y^n\in A_{\epsilon,0}^{(n)}(Y),\mathbf{C}\text{ is typical}| E_{\epsilon}) \\
\stackrel {}{=}& n[H_0(Y)-\epsilon^*]\cdot(1-o(1)) \cdot  (1-o(1))\label{(i)}  \\
\nonumber=& n[H_0(Y)-\epsilon^*]\cdot(1-o(1))
\end{align}

(\ref{(f)}) follows from the fact that conditioning reduces entropy.

(\ref{(g)}) follows from the fact that $\mbox{Pr}(E_{\epsilon}) \to 1$ as $n \to \infty$, for any $\epsilon>0$.

(\ref{(h)}) follows from Theorem \ref{T:OdmcU}, which upper bounds $p(y^n|E_{\epsilon}, \mathcal{C})$ by $2^{-n[H_0(Y)-\epsilon^*]}$ for any $y^n\in A_{\epsilon,0}^{(n)}(Y)$ and typical $\mathcal{C}$, where $\epsilon^{*} \to 0$ as $n \to \infty$.

(\ref{(i)}) follows from the fact that $$\mbox{Pr}(Y^n\in A_{\epsilon,0}^{(n)}(Y),\mathbf{C}\text{ is typical}| E_{\epsilon}) \to 1 \text{ as } n \to \infty.$$ This can be seen from the following.
\begin{align}
\nonumber &\mbox{Pr}(Y^n\in A_{\epsilon,0}^{(n)}(Y),\mathbf{C}\text{ is typical}| E_{\epsilon}) \\
\nonumber =&\frac{\mbox{Pr}(Y^n\in A_{\epsilon,0}^{(n)}(Y), E_{\epsilon},\mathbf{C}\text{ is typical})}{\mbox{Pr}(E_{\epsilon})}\\
\geq & \frac{\mbox{Pr}((X^n,Y^n)\in A_{\epsilon,0}^{(n)}(X,Y), E_{\epsilon},\mathbf{C}\text{ is typical})}{\mbox{Pr}(E_{\epsilon})}\label{L:NDgoto1}.
\end{align}
Since  $\mbox{Pr}(E_{\epsilon})$, $\mbox{Pr}(\mathbf{C}\text{ is typical})$ and $\mbox{Pr}((X^n,Y^n)\in A_{\epsilon,0}^{(n)}(X,Y))$ all go to 1, obviously  both the numerator and denominator of (\ref{L:NDgoto1}) go to 1 as $n \to \infty$. Thus, $$\mbox{Pr}(Y^n\in A_{\epsilon,0}^{(n)}(Y),\mathbf{C}\text{ is typical}| E_{\epsilon}) \to 1 \text{ as } n \to \infty.$$

Therefore, when $R>I_0(X;Y)$,
\begin{align}
\nonumber &\liminf _{n \to \infty} \frac{1}{n} H(Y^n|\mathbf{C})\\
\nonumber \geq &\liminf _{n \to \infty} \frac{1}{n}\left( n[H_0(Y)-\epsilon^*]\cdot(1-o(1))\right)\\
\nonumber =&\liminf _{n \to \infty} [H_0(Y)-\epsilon^*]\cdot(1-o(1))\\
=& H_0(Y).\label{E:sandwich 1}
\end{align}

Furthermore,
\begin{align}
 \limsup _{n \to \infty} \frac{1}{n} H(Y^n|\mathbf{C}) \leq & \limsup _{n \to \infty} \frac{1}{n} H(Y^n)=H_0(Y), \label{E:sandwich 2}
\end{align}
where the last equality follows from Lemma \ref{L:JAEP}.

Combining (\ref{E:sandwich 1}) and (\ref{E:sandwich 2}), we have that when $R>I_0(X;Y)$,
\begin{align*}
\lim _{n \to \infty} \frac{1}{n} H(Y^n|\mathbf{C})=H_0(Y).
\end{align*}

\subsection{When $R<I_0(X;Y)$}

To find $\lim _{n \to \infty} \frac{1}{n} H(Y^n|\mathbf{C})$ when $R<I_0(X;Y)$, we first introduce two lemmas. The proofs of these two lemmas are given at the end of this section.

\begin{lemma}\label{L:rlessc1}
When $R<I_0(X;Y)$,
\begin{align*}\frac{1}{n}H(X^n|\mathbf{C},Y^n)\to 0, \text{ as } n \to \infty.\end{align*}
\end{lemma}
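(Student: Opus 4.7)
The plan is to apply Fano's inequality, using the fact that $R<I_0(X;Y)\le C$ is below the capacity so the message $W$ can be reliably decoded from $(\mathbf{C},Y^n)$. Since $X^n=X^n(W)$ is a deterministic function of $W$ and $\mathbf{C}$, we have $H(X^n|\mathbf{C},Y^n)\le H(W|\mathbf{C},Y^n)$, so it suffices to produce a decoder whose error probability vanishes and then invoke Fano.

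Concretely, I would set up the standard joint-typicality decoder: declare $\hat W=w$ if $w$ is the unique index with $(X^n(w),Y^n)\in A_{\epsilon,0}^{(n)}(X,Y)$, and declare an error otherwise. For the ``correct codeword is jointly typical'' event, Lemma \ref{L:JAEP} gives $\Pr((X^n(W),Y^n)\in A_{\epsilon,0}^{(n)}(X,Y))\to 1$. For the false-alarm event, I would condition on $Y^n\in A_{\epsilon,0}^{(n)}(Y)$ (which holds with probability $\to 1$ by Lemma \ref{L:JAEP}) and use the fact that the remaining codewords $X^n(w')$, $w'\neq W$, are independent of $Y^n$ and drawn from $p_0$ conditioned on strong typicality; by the same counting argument used in Lemma \ref{L:NinT} (see equations (\ref{E:new1})--(\ref{E:new2})), each such codeword is jointly $\epsilon$-typical with a given $y^n\in A_{\epsilon,0}^{(n)}(Y)$ with probability at most $(1+o(1))2^{-n(I_0(X;Y)-\epsilon_1)}$. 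A union bound over the at most $2^{nR}$ wrong codewords then yields a false-alarm probability bounded by $2^{n(R-I_0(X;Y)+\epsilon_1)}(1+o(1))$, which tends to $0$ provided $\epsilon$ is taken small enough that $\epsilon_1<I_0(X;Y)-R$.

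With the overall average error probability $P_e\to 0$ in hand, Fano's inequality for the message set of size $2^{nR}$ gives
\begin{equation*}
H(W|\mathbf{C},Y^n)\le 1+P_e\cdot nR,
\end{equation*}
and hence
\begin{equation*}
\tfrac{1}{n}H(X^n|\mathbf{C},Y^n)\le \tfrac{1}{n}H(W|\mathbf{C},Y^n)\le \tfrac{1}{n}+P_e R\longrightarrow 0,
\end{equation*}
which is the desired conclusion.

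The only subtlety to watch out for is that the codewords are not exactly i.i.d.\ $p_0$ but rather i.i.d.\ $p_0$ conditioned on strong typicality, so the joint-typicality error analysis is not quite the textbook one. This is a mild issue: since $\Pr(\tilde X^n\in A_{\epsilon,0}^{(n)}(X))\to 1$, conditioning inflates every probability by a factor of $1+o(1)$, exactly as already exploited in (\ref{E:new1}) and (\ref{E:new2}), so the standard bounds still apply. No more serious obstacle is anticipated.
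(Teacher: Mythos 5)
Your proposal is correct and follows essentially the same route as the paper: bound $H(X^n|\mathbf{C},Y^n)\le H(W|\mathbf{C},Y^n)$, exhibit a decoder with vanishing error when $R<I_0(X;Y)$, and invoke Fano's inequality. The only difference is at the step producing the vanishing error probability: the paper simply cites the random-coding channel coding theorem to assert $\sum_{\mathcal{C}}p(\mathcal{C})P_e^{(n)}(\mathcal{C})\to 0$, whereas you re-derive it by constructing the joint-typicality decoder and running the union-bound analysis, explicitly tracking the $(1+o(1))$ factor coming from the fact that codewords are drawn from $p_0$ conditioned on strong typicality rather than plain i.i.d.\ $p_0$. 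Your version is thus a bit more self-contained and closes a small gap the paper leaves implicit (the textbook achievability proof is for unconditioned i.i.d.\ codebooks), but it is the same argument at the level of ideas.
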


\begin{lemma}\label{L:rlessc2}
\begin{align*}
\lim _{n \to \infty} \frac{1}{n} H(X^n|\mathbf{C})= R.
\end{align*}
\end{lemma}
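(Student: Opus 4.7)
The plan is to exploit the fact that $X^n=X^n(W)$ is determined by the uniform random message index $W\in\{1,\ldots,2^{nR}\}$ and the codebook $\mathbf{C}$, with $W$ independent of $\mathbf{C}$. Applying the chain rule to $H(X^n,W|\mathbf{C})$ in two ways yields
\begin{equation*}
H(X^n|\mathbf{C}) = H(W|\mathbf{C}) + H(X^n|W,\mathbf{C}) - H(W|X^n,\mathbf{C}) = nR - H(W|X^n,\mathbf{C}).
\end{equation*}
Dividing by $n$ and using $H(W|X^n,\mathbf{C})\geq 0$ already delivers the upper bound $\limsup_{n\to\infty}\frac{1}{n}H(X^n|\mathbf{C})\leq R$. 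The remaining task is to show that the correction $H(W|X^n,\mathbf{C})$ grows sublinearly in $n$.

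For the lower bound, note that conditioned on $\mathbf{C}=\mathcal{C}$ and $X^n=x^n$, Bayes' rule makes the posterior on $W$ uniform over $\{w : x^n(w)=x^n\}$, so $H(W|X^n=x^n,\mathbf{C}=\mathcal{C})=\log N(\mathcal{C},x^n)$, where $N(\mathcal{C},x^n):=\sum_w\mathbb{I}(x^n(w)=x^n)$ is the multiplicity of $x^n$ in the codebook. Taking expectations and using Jensen's inequality on the concave logarithm,
\begin{equation*}
H(W|X^n,\mathbf{C}) = \E\bigl[\log N(\mathbf{C},X^n)\bigr] \leq \log \E\bigl[N(\mathbf{C},X^n)\bigr],
\end{equation*}
so it suffices to show that $\E[N(\mathbf{C},X^n)]$ is subexponential.

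To bound this expectation I would condition on $W$ and split the sum over $w$: the term $w=W$ contributes $1$, while for $w\neq W$ the codewords $X^n(w)$ and $X^n(W)$ are i.i.d.\ with common law $p(x^n)=\mbox{Pr}(\tilde X^n=x^n\mid \tilde X^n\in A_{\epsilon,0}^{(n)}(X))$, giving
\begin{equation*}
\E[N(\mathbf{C},X^n)] = 1 + (2^{nR}-1)\sum_{x^n}p(x^n)^2 \leq 1 + 2^{nR}\cdot\max_{x^n} p(x^n).
\end{equation*}
By typicality on $A_{\epsilon,0}^{(n)}(X)$ together with $\mbox{Pr}(A_{\epsilon,0}^{(n)}(X))\to 1$, we have $\max_{x^n}p(x^n)\leq 2^{-n(H_0(X)-\epsilon')}(1+o(1))$ with $\epsilon'\to 0$ as $\epsilon\to 0$. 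Since the lemma is invoked in the regime $R<I_0(X;Y)\leq H_0(X)$, taking $\epsilon$ small enough forces $\E[N(\mathbf{C},X^n)]\to 1$, so $\frac{1}{n}H(W|X^n,\mathbf{C})\to 0$, completing the proof. The principal difficulty is controlling $H(W|X^n,\mathbf{C})$ without assuming a collision-free codebook: a direct union bound over pairwise codeword collisions only reaches $R<H_0(X)/2$, and the Jensen step replacing $\E[\log N]$ by $\log\E[N]$ is what extends the argument to the full relevant range $R<H_0(X)$.
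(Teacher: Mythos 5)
Your proof is correct, and it takes a genuinely different and notably lighter route than the paper. The paper lower-bounds $H(X^n|\mathbf{C})$ by first introducing the notion of a ``regular'' codebook, for which each codeword multiplicity $N(x^n|\mathcal{C})$ is at most $2^{nR}p(x^n)+n^3R = \text{poly}(n)$, and then invoking the Vapnik--Chervonenkis theorem (with an auxiliary VC-dimension bound on the family of singletons of $A_{\epsilon,0}^{(n)}(X)$) to show that regular codebooks occur with high probability; from there $p(x^n|\mathcal{C})\leq 2^{-n(R-\epsilon'')}$ gives $H(X^n|\mathcal{C})\geq n(R-\epsilon'')$ directly. You instead use the exact chain-rule identity $H(X^n|\mathbf{C})=nR-H(W|X^n,\mathbf{C})$, recognize $H(W|X^n,\mathbf{C})=\E[\log N(\mathbf{C},X^n)]$ since the posterior on $W$ is uniform over the colliding indices, and then control this quantity with a single application of Jensen plus a first-moment collision count, $\E[N(\mathbf{C},X^n)]=1+(2^{nR}-1)\sum_{x^n}p(x^n)^2\to 1$. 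This sidesteps the VC machinery entirely and is self-contained; your closing remark correctly identifies the role Jensen plays in reaching the full range $R<H_0(X)$ rather than the $R<H_0(X)/2$ that a union bound over collisions would give. Note that, like the paper (which flags this as ``the general assumption that $R<H_0(X)$''), your argument implicitly needs $R<H_0(X)$, which is safe here since the lemma is only invoked for $R<I_0(X;Y)\leq H_0(X)$.
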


Now, expanding $H(X^n,Y^n|\mathbf{C})$ in two different ways, we have
\begin{align*}
H(X^n,Y^n|\mathbf{C})=&H(X^n|\mathbf{C})+H(Y^n|X^n,\mathbf{C})\\
=&H(Y^n|\mathbf{C})+H(X^n|\mathbf{C},Y^n),
\end{align*}
and thus
\begin{align*}
H(Y^n|\mathbf{C})=H(X^n|\mathbf{C})+H(Y^n|X^n,\mathbf{C})-H(X^n|\mathbf{C},Y^n).
\end{align*}
Therefore, when $R<I_0(X;Y)$,
\begin{align}
\nonumber \lim _{n \to \infty} \frac{1}{n} H(Y^n|\mathbf{C})=&\lim _{n \to \infty} \frac{1}{n} H(X^n|\mathbf{C})+\lim _{n \to \infty} \nonumber \nonumber \frac{1}{n}H(Y^n|X^n,\mathbf{C})-\lim _{n \to \infty} \frac{1}{n}H(X^n|\mathbf{C},Y^n) \\
=&R+\lim _{n \to \infty} \frac{1}{n}H(Y^n|X^n,\mathbf{C})\label{(j)}\\
=&R+\lim _{n \to \infty} \frac{1}{n}H(Y^n|X^n)\label{(k)}\\
\nonumber =&R+\lim _{n \to \infty} \frac{1}{n}[H(X^n,Y^n)-H(X^n)]\\
=&R+H_0(X,Y)-H_0(X)\label{(l)}\\
\nonumber =&R+H_0(Y|X),
\end{align}
where (\ref{(j)}) follows from Lemma \ref{L:rlessc1} and \ref{L:rlessc2}, (\ref{(k)}) follows from the fact that $\mathbf{C}\rightarrow X^n\rightarrow Y^n$ forms a Markov Chain, and (\ref{(l)}) follows from Lemma \ref{L:JAEP}. This completes the proof of Theorem \ref{T:rate}.
\end{proof}

\begin{proof}[Proof of Lemma \ref{L:rlessc1}]
To prove Lemma \ref{L:rlessc1}, we begin with Fano's Inequality (see Theorem 2.11.1 in \cite{Coverbook}):

Let $P_e=\mbox{Pr}(g(Y)\neq X)$, where $g$ is any function of $Y$. Then
\begin{equation}
1+P_e \log|\mathcal{X}|\geq H(X|Y).
\end{equation}

For the channel $(\mathcal{X},p(y|x),\mathcal{Y})$ with a codebook $\mathcal{C}$, we estimate the message index $W$ from $Y^n$. Let the estimate be
$\hat{W}=g(Y^n)$ and $P_e^{(n)}(\mathcal{C})=\mbox{Pr}(W \neq g(Y^n)|\mathcal{C})$. Then, applying Fano's Inequality, we have
\begin{align*}
H(W|Y^n,\mathcal{C}) \leq 1+ P_e^{(n)}(\mathcal{C})\log 2^{nR}=1+ P_e^{(n)}(\mathcal{C})nR.
\end{align*}

Since given $\mathcal{C}$, $X^n$ is a function of $W$, say $X^n=X^n(W)$, we have
\begin{align*}
H(X^n|Y^n,\mathcal{C})\leq H(W|Y^n,\mathcal{C}) \leq 1+ P_e^{(n)}(\mathcal{C})nR.
\end{align*}
Then,
\begin{align*}
H(X^n|Y^n,\mathbf{C})=\sum_{\mathcal{C}}p(\mathcal{C}) H(W|Y^n,\mathcal{C}) \leq\sum_{\mathcal{C}}p(\mathcal{C})( 1+ P_e^{(n)}(\mathcal{C})nR).
\end{align*}

Recall the channel coding theorem, which states that if we randomly generate the codebook according to $p_0(x)$, then when $R<I_0(X;Y)$,
\begin{align}
\sum_{\mathcal{C}}p(\mathcal{C})P_e^{(n)}(\mathcal{C})\to 0. \label{E:errorprobability}
\end{align}

Therefore, when $R<I_0(X;Y)$,
\begin{align*}
\limsup_{n\to \infty}\frac{1}{n}H(X^n|Y^n,\mathbf{C}) \leq & \limsup_{n\to \infty}\frac{1}{n} \sum_{\mathcal{C}}p(\mathcal{C})[ 1+ P_e^{(n)}(\mathcal{C})nR]\\
=& \limsup_{n\to \infty}\frac{1}{n} [1+ nR\sum_{\mathcal{C}}p(\mathcal{C})P_e^{(n)}(\mathcal{C})     ]\\
=&\limsup_{n\to \infty}\frac{1}{n}+\limsup_{n\to \infty}R\sum_{\mathcal{C}}p(\mathcal{C})P_e^{(n)}(\mathcal{C}) \\
=&0.
\end{align*}
Furthermore, it is obvious that $\frac{1}{n}H(X^n|Y^n,\mathbf{C})\geq 0$ and hence
\begin{align*}\frac{1}{n}H(X^n|\mathbf{C},Y^n)\to 0, \text{ as } n \to \infty,\end{align*}
when $R<I_0(X;Y)$.
\end{proof}

\begin{proof}[Proof of Lemma \ref{L:rlessc2}]
Given any $\mathcal{C}$, $X^n$ is a function of $W$. Thus, $H(X^n|\mathcal{C}) \leq H(W|\mathcal{C})=nR,$
and
\begin{align}
\frac{1}{n}H(X^n|\mathbf{C})=&\frac{1}{n}\sum_{\mathcal{C}}p(\mathcal{C})H(X^n|\mathcal{C})\leq R.\label{E:cbound1}
\end{align}

Therefore, to show Lemma \ref{L:rlessc2}, it suffices to show that $\lim_{n \to \infty} \frac{1}{n}H(X^n|\mathbf{C})\geq R$. For this purpose, we first define a class of codebooks as regular codebooks and focus on characterizing $H(X^n|\mathcal{C})$ for a regular codebook $\mathcal{C}$. Then, we show that a regular codebook appears with high probability when we randomly generate the codebook, and conclude that $\lim_{n \to \infty} \frac{1}{n}H(X^n|\mathbf{C})\geq R$.

We say a codebook $\mathcal{C}$ is regular if
\begin{align*}\sup_{x^n \in A_{\epsilon,0}^{(n)}(X) }  \left| \frac{N(x^n|\mathcal{C})}{2^{nR}} - p(x^n) \right|   \leq \frac{n^{3}R}{2^{nR}},\end{align*}
where $N(x^n|\mathcal{C})$ is the number of occurrences of $x^n$ in $\mathcal{C}$, defined by
\begin{align*}
N(x^n|\mathcal{C})=\sum _{w=1}^{2^{nR}}\mathbb{I}(x^n(w)=x^n),
\end{align*}
and $p(x^n)=\mbox{Pr}(\tilde{X}^{n} = x^n|\tilde{X}^n \in A_{\epsilon,0}^{(n)}(X))$ where $\tilde{X}^{n}$ is drawn i.i.d. according to $p_0(x)$.

Given a regular $\mathcal{C}$, for any $x^n \in A_{\epsilon,0}^{(n)}(X) $, we have
\begin{align}
\nonumber N(x^n|\mathcal{C}) \leq& 2^{nR}p(x^n) + n^{3}R\\
\nonumber =& 2^{nR}\mbox{Pr}(\tilde{X}^{n} = x^n|\tilde{X}^n \in A_{\epsilon,0}^{(n)}(X)) + n^{3}R\\
\leq& 2^{nR}(1+o(1)) 2^{-n(H_0(X)-\epsilon')}+ n^{3}R\label{epsilon}\\
=&n^{3}R+o(1), \label{wlog}
\end{align}
where the $\epsilon'$ in (\ref{epsilon}) goes to 0 as $\epsilon \to 0$ and (\ref{wlog}) follows from the general assumption that $R < H_0(X)$.
Note that the message index $W$ is uniformly distributed, we have for a given $\mathcal{C}$ and any $x^n \in A_{\epsilon,0}^{(n)}(X) $,
\begin{align*}
p(x^n|\mathcal{C})=&\frac{\sum _{w=1}^{2^{nR}}\mathbb{I}(x^n(w)=x^n)}{2^{nR}}\\
=&\frac{N(x^n|\mathcal{C})}{2^{nR}}\\
\leq& \frac{n^{3}R+o(1)}{2^{nR}}\\
=:& 2^{-n(R-\epsilon'')}
\end{align*}
where $\epsilon''$ goes to 0 as $n \to \infty$.
Therefore,
\begin{align*}
H(X^n|\mathcal{C})=&\sum_{x^n \in A_{\epsilon,0}^{(n)}(X)} p(x^n|\mathcal{C})\log \frac{1}{p(x^n|\mathcal{C})}\\
\geq & \sum_{x^n \in A_{\epsilon,0}^{(n)}(X)} p(x^n|\mathcal{C})\log 2^{n(R-\epsilon'')}\\
=&[n(R-\epsilon'')]\sum_{x^n \in A_{\epsilon,0}^{(n)}(X)} p(x^n|\mathcal{C})\\
=&[n(R-\epsilon'')].\end{align*}

Below, We use the Vapnik-Chervonenkis Theorem to show that a regular codebook appears with high probability.

Let $\mathcal{B}=\{\{x^n\}, x^n \in A_{\epsilon,0}^{(n)}(X) \}$. Since
$|\mathcal{B}|=|A_{\epsilon,0}^{(n)}(X)|\leq  2^{n(H_0(X)+\epsilon)}$, for any $A \subseteq \mathcal{X}^{n}$,
$$|\{\{x^n\}\cap A: x^n \in A_{\epsilon,0}^{(n)}(X)\}| \leq 2^{n(H_0(Y)+\epsilon)},$$
and hence VC-d$(\mathcal{B}) \leq n(H_0(X)+\epsilon)$.

Since VC-d$(\mathcal{B})$ is finite for a fixed $n$, we employ the Vapnik-Chervonenkis Theorem under the range space $(\mathcal{X}^{n},\mathcal{B})$. To satisfy (\ref{E:vctheorem2}), let both $\epsilon$ and $\delta$ in (\ref{E:vctheorem1}) be $\frac{\Delta_{\epsilon}nR}{2^{nR}}$, where $\Delta_{\epsilon}:=\max \{ 8\text{VC-d}(\mathcal{B}), 16e  \}.$ Then the Vapnik-Chervonenkis Theorem states that
\begin{align}
\nonumber &\mbox{Pr}\left\{\sup_{x^n \in A_{\epsilon,0}^{(n)}(X) }  \left| \frac{N(x^n|\mathbf{C})}{2^{nR}} - p(x^n) \right| \leq \frac{\Delta_{\epsilon}nR}{2^{nR}}     \right\}\\
\nonumber \geq &1-\frac{\Delta_{\epsilon}nR}{2^{nR}}\\
\to & 1 \ \text{as} \ n \to \infty . \label{E:O1}
\end{align}
Since $\frac{n^3 R}{2^{nR}}  \geq \frac{\Delta_{\epsilon}nR}{2^{nR}} $ for sufficiently large $n$, (\ref{E:O1}) concludes that $\mbox{Pr}(\mathbf{C} \text{~is regular}) \to 1$ as $n \to \infty$.

Therefore,
\begin{align*}
H(X^n|\mathbf{C})=&\sum_{\mathcal{C}} p(\mathcal{C})H(X^n|\mathcal{C})\\
\geq & \sum_{\mathcal{C} \text{~is regular}} p(\mathcal{C})H(X^n|\mathcal{C})\\
\geq &[n(R-\epsilon'')]\sum_{\mathcal{C} \text{~is regular}} p(\mathcal{C})\\
=&[n(R-\epsilon'')](1-o(1)),\end{align*}
and
\begin{align}
\nonumber \lim_{n\to \infty}\frac{1}{n}H(X^n|\mathbf{C})\geq &\lim_{n\to \infty}\frac{1}{n}[n(R-\epsilon'')](1-o(1))\\
\nonumber = & \lim_{n\to \infty} (R-\epsilon'')(1-o(1))\\
=&R.\label{E:cbound2}\end{align}
Combining (\ref{E:cbound1}) and (\ref{E:cbound2}), we finish the proof of Lemma \ref{L:rlessc2}.
\end{proof}

\section{Rate Needed to Compress Relay's Observation}\label{S:Relay}
To study the optimality of the compress-and-forward strategy, in this section, we investigate the rate needed for the relay to losslessly compress its observation. In the classical approach of \cite{covelg79}, the compression scheme at the relay was only based on the distribution used for generating the codebook at the source, without being specific on the codebook generated. However, since both the relay and destination have the knowledge of the exact codebook used at the source, it is natural to ask whether it is beneficial for the relay to compress its observation based on this codebook information. This question motivates us to compare the rates needed to compress the relay's observation in two different scenarios: when the relay uses the knowledge of the source's codebook and when the relay simply ignores this knowledge.

Specifically, we consider the two compression problems shown in Figure \ref{F:compression}, where $Y^n$ is generated from $X^n$ through the channel $(\mathcal{X},p(y|x),\mathcal{Y})$, and $\mathbf{C}$ in (b) is the source's codebook information available to both the encoder and decoder. Interestingly, we will show that to perfectly recover $Y^n$, the minimum required rates in both scenarios are the same when the rate $R$ associated with $\mathbf{C}$ is greater than the channel capacity.

\begin{figure}[hbt]
\centering
\includegraphics[width=3in]{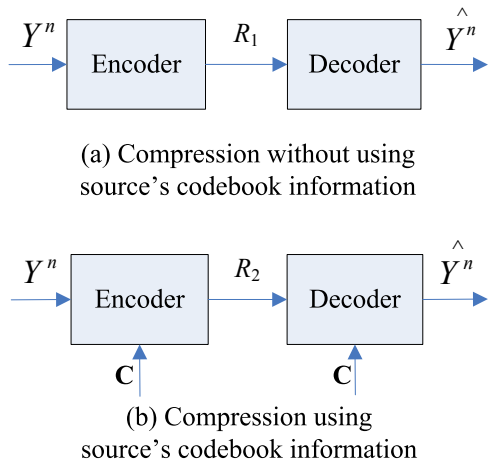}
\caption{Two scenarios where the relay compresses its observation.}
 \label{F:compression}
\end{figure}

Formally, we have the following theorem:

\begin{theorem}\label{T:compression}
For the discrete memoryless channel $(\mathcal{X},p(y|x),\mathcal{Y})$, generate the codebook at random according to $p_0(x)$ and reserve only the $\epsilon$-strongly typical codewords. Let $\mathbf{C}$ be the source's codebook with rate $R$, and $X^n$ and $Y^n$ be the input and output of the channel respectively. When $R>I_{0}(X;Y)$, to compress the channel output $Y^n$, we have

1) $Y^n$ can be encoded at rate $R_1$ and recovered with arbitrarily low probability of error if $R_1 > H_0(Y)$.

2) Given that the source's codebook information $\mathbf{C}$ is available to both the encoder and decoder and $Y^n$ is encoded at rate $R_2$, the decoding probability of error will be bounded away from zero if $R_2 < H_0(Y)$, which implies that we cannot compress the channel output better even if the source's codebook information is employed.
\end{theorem}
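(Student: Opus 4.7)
The plan is to handle the two parts separately: part 1 is a straightforward typical-set source coding argument, while part 2 is the substantive direction, and it will reduce to a Fano-type converse that plugs directly into Theorem \ref{T:rate}.

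For part 1, I would use the AEP-based encoding of $Y^n$ on the typical set $A_{\epsilon,0}^{(n)}(Y)$. From part 1 of Lemma \ref{L:JAEP} we already have $\Pr((X^n,Y^n)\in A_{\epsilon,0}^{(n)}(X,Y))\to 1$, and since joint typicality implies marginal typicality this immediately gives $\Pr(Y^n\in A_{\epsilon,0}^{(n)}(Y))\to 1$. Using the standard AEP size bound $|A_{\epsilon,0}^{(n)}(Y)|\le 2^{n(H_0(Y)+\epsilon)}$, we index every sequence in $A_{\epsilon,0}^{(n)}(Y)$ by a distinct string of length $\lceil n(H_0(Y)+\epsilon)\rceil$, and map every atypical sequence to an arbitrary fixed index (declaring error). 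For any $R_1>H_0(Y)$, choose $\epsilon$ small enough that $R_1>H_0(Y)+\epsilon$; then the encoding fits within rate $R_1$, and the probability of error is upper bounded by $\Pr(Y^n\notin A_{\epsilon,0}^{(n)}(Y))\to 0$.

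For part 2, the key input is the identity $\lim_{n\to\infty}\frac{1}{n}H(Y^n\mid\mathbf{C})=H_0(Y)$ supplied by Theorem \ref{T:rate} in the regime $R>I_0(X;Y)$. Suppose we have an encoder $f_n:\mathcal{Y}^n\to\{1,\ldots,2^{nR_2}\}$ and a decoder $g_n$ that both depend on $\mathbf{C}$, and let $P_e^{(n)}=\Pr(g_n(f_n(Y^n),\mathbf{C})\neq Y^n)$. By Fano's inequality applied conditionally on $(f_n(Y^n),\mathbf{C})$ together with the standard chain rule expansion
\begin{align*}
H(Y^n\mid\mathbf{C}) &= H(f_n(Y^n)\mid\mathbf{C})+H(Y^n\mid f_n(Y^n),\mathbf{C})\\
&\leq nR_2 + 1 + P_e^{(n)}\, n\log|\mathcal{Y}|,
\end{align*}
we obtain $\frac{1}{n}H(Y^n\mid\mathbf{C})\le R_2+\frac{1}{n}+P_e^{(n)}\log|\mathcal{Y}|$. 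Taking $\liminf_{n\to\infty}$ and invoking Theorem \ref{T:rate} on the left-hand side yields $H_0(Y)\le R_2 + \log|\mathcal{Y}|\cdot\liminf_{n\to\infty}P_e^{(n)}$, so that whenever $R_2<H_0(Y)$ we get $\liminf_{n\to\infty}P_e^{(n)}\ge (H_0(Y)-R_2)/\log|\mathcal{Y}|>0$.

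The only mildly delicate point will be bookkeeping in the Fano step, namely handling the fact that $f_n$ and $g_n$ are allowed to depend on the realization of $\mathbf{C}$. This is routinely addressed by applying Fano's inequality conditionally on each realization of $\mathbf{C}$ and then averaging, but it is the one place where care is required. The genuine content — that the codebook information is asymptotically useless for compressing $Y^n$ when $R>I_0(X;Y)$ — has already been done by Theorem \ref{T:rate}, so the converse here is really just the standard source-coding converse applied to that entropy rate.
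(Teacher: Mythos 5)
Your proof follows essentially the same route as the paper: Part 1 is the standard AEP source-coding argument built on Lemma~\ref{L:JAEP}, and Part 2 is the Fano-based source-coding converse (which the paper isolates as Lemma~\ref{L:converse}) combined with Theorem~\ref{T:rate}. Your inline Part 2 is, if anything, marginally sharper, since rather than arguing only the contrapositive ``$P_e^{(n)}\to 0$ implies $R_2\geq H_0(Y)$'' you extract the explicit bound $\liminf_{n\to\infty}P_e^{(n)}\geq (H_0(Y)-R_2)/\log|\mathcal{Y}|$, which delivers the ``bounded away from zero'' claim directly.
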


To show Theorem \ref{T:compression}, we need the following lemma.

\begin{lemma}\label{L:converse}
For the compression problem in Figure \ref{F:compression}-(b), we can encode $Y^n$ at rate $R_2$ and recover it with the probability of error $P_e^{(n)} \to 0$ only if
\begin{equation}
R_2 \geq \lim _{n \to \infty} \frac{1}{n} H(Y^n|\mathbf{C}).\label{E:converse1}
\end{equation}
\end{lemma}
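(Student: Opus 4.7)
The plan is to prove the converse by the standard Fano-inequality argument applied to source coding with common side information, where here the side information is the codebook $\mathbf{C}$. Let the encoder be any (possibly randomized) map $T = f(Y^n, \mathbf{C})$ with $T$ taking values in an index set of size $2^{nR_2}$, and the decoder be a map $\hat Y^n = g(T, \mathbf{C})$. The error probability is $P_e^{(n)} = \mbox{Pr}(\hat Y^n \neq Y^n)$, which we are assuming tends to $0$.

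First I would expand $H(Y^n|\mathbf{C})$ by introducing $T$. Since $T$ is a deterministic function of $(Y^n,\mathbf{C})$, we have $H(Y^n|\mathbf{C}) = H(Y^n,T|\mathbf{C}) = H(T|\mathbf{C}) + H(Y^n|T,\mathbf{C})$. The first term is bounded by the log of the alphabet size of $T$, giving $H(T|\mathbf{C}) \leq H(T) \leq nR_2$. For the second term I would invoke Fano's inequality conditioned on $(T,\mathbf{C})$: since $\hat Y^n$ is a function of $(T,\mathbf{C})$ and $\mbox{Pr}(\hat Y^n \neq Y^n) = P_e^{(n)}$, we obtain $H(Y^n|T,\mathbf{C}) \leq 1 + P_e^{(n)} \log|\mathcal{Y}^n| = 1 + n P_e^{(n)} \log|\mathcal{Y}|$.

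Putting these pieces together,
\begin{equation*}
\frac{1}{n} H(Y^n|\mathbf{C}) \;\leq\; R_2 + \frac{1}{n} + P_e^{(n)} \log|\mathcal{Y}|.
\end{equation*}
Taking $n \to \infty$ and using $P_e^{(n)} \to 0$ yields $\lim_{n\to\infty} \frac{1}{n} H(Y^n|\mathbf{C}) \leq R_2$, which is exactly (\ref{E:converse1}). The existence of the limit on the left-hand side is guaranteed by Theorem \ref{T:rate}, so the $\liminf$ vs.\ $\limsup$ distinction is not an issue; otherwise one would simply replace the equality by $\liminf$.

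There is no real obstacle here: the argument is the standard Slepian--Wolf-style converse where the side information $\mathbf{C}$ is common to encoder and decoder, so it behaves exactly like classical source coding of a source with entropy rate $\lim_n \frac{1}{n} H(Y^n|\mathbf{C})$. The only point that warrants care is verifying that conditional Fano applies with the conditioning random variable being the pair $(T,\mathbf{C})$ rather than a single variable, and that the bound $H(T|\mathbf{C}) \le nR_2$ is valid whether or not the encoder actually exploits $\mathbf{C}$ (it is, since $T$ lives in a set of size $2^{nR_2}$ regardless).
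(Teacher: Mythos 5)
Your argument is correct and essentially identical to the paper's: both use the decomposition of $H(Y^n,I\mid\mathbf{C})$ with $I$ the encoder output, the bound $H(I)\le nR_2$ from the cardinality of the index set, Fano's inequality for $H(Y^n\mid I,\mathbf{C})$, and the fact that $I$ is a function of $(Y^n,\mathbf{C})$. The only cosmetic difference is the order in which the chain of inequalities is written, so there is nothing to add.
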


\begin{proof}[Proof of Lemma \ref{L:converse}]
The source code for Figure \ref{F:compression}-(b) consists of an encoder mapping $f(Y^n, \mathbf{C})$ and a decoder mapping $g(f(Y^n, \mathbf{C}), \mathbf{C})$. Let $I=f(Y^n, \mathbf{C})$, then $P_e^{(n)}=\mbox{Pr}(g(I,\mathbf{C})\neq Y^n)$. By Fano's Inequality, for any source code with $P_e^{(n)} \to 0$, we have
\begin{equation}
H(Y^n|I, \mathbf{C})\leq P_e^{(n)}  \log |\mathcal{Y}^n|+1= P_e^{(n)}n  \log |\mathcal{Y}|+1=n\epsilon_{n},\label{E:fano}
\end{equation}
where $\epsilon_n \to 0$ as $n \to \infty$.

Therefore, for any source code with rate $R_2$ and $P_e^{(n)} \to 0$, we have the following chain of inequalities
\begin{align}
nR_2 &\stackrel {}{\geq} H(I)\label{m}\\
\nonumber &\stackrel {}{\geq} H(I|\mathbf{C}) \\
\nonumber &\stackrel { }{=} H(Y^n,I|\mathbf{C})-H(Y^n|I,\mathbf{C})\\
\nonumber &\stackrel { }{=} H(Y^n|\mathbf{C})+H(I|Y^n, \mathbf{C})-H(Y^n|I,\mathbf{C})\\
&\stackrel {}{=} H(Y^n|\mathbf{C})-H(Y^n|I,\mathbf{C})\label{o}\\
&\stackrel {}{\geq} H(Y^n|\mathbf{C})-n\epsilon_n \label{p}
\end{align}
where (\ref{m}) follows from the fact that $I \in \{1,2,\dots,2^{nR_{2}}\}$, (\ref{o}) follows from the fact that $I$ is a function of $Y^n$ and $\mathbf{C}$, and (\ref{p}) follows from (\ref{E:fano}). Dividing the inequality $nR_2 \geq  H(Y^n|\mathbf{C})-n\epsilon_n$ by $n$ and taking the limit as $n \to \infty$, we establish Lemma \ref{L:converse}.
\end{proof}

\begin{proof}[Proof of Theorem \ref{T:compression}]

Proof of Part 1):
To show Part 1), we only need to show that the sequence $Y^n$ satisfies the Asymptotic Equipartition Property, i.e., $\mbox{Pr}(Y^n \in A_{\epsilon,0}^{(n)}(Y)) \to 1$, as $n \to \infty$. Then, following the classical approach to show the source coding theorem, we can conclude that the rate $R_1>H_0(Y)$ is achievable. By Lemma \ref{L:JAEP}, $\mbox{Pr}((X^n,Y^n) \in A_{\epsilon,0}^{(n)}(X,Y))\to 1 \ \text{as} \ n \to \infty.$
Thus, the sequence $Y^n$ satisfies the Asymptotic Equipartition Property and the rate $R_1>H_0(Y)$ is achievable.

Proof of Part 2): By Lemma \ref{L:converse}, given that the codebook information $\mathbf{C}$ is available to both the encoder and decoder and $Y^n$ is encoded at rate $R_2$, $P_e^{(n)} \to 0$ only if $R_2 \geq \lim _{n \to \infty} \frac{1}{n} H(Y^n|\mathbf{C})$. By Theorem \ref{T:rate}, $\lim _{n \to \infty} \frac{1}{n} H(Y^n|\mathbf{C})= H_0(Y)$ when $R>I_0(X;Y)$. Therefore, when $R>I_0(X;Y)$, $P_e^{(n)} \to 0$ only if $R_2 \geq H_0(Y)$, which establishes Part 2).
\end{proof}

\appendices
\section{Proof of Lemma \ref{L:JAEP}}\label{A:A}

Proof of Part 1): Let $\tilde{X}^n$ be drawn i.i.d. according to $p_0(x)$ and $\tilde{Y}^n$ be generated from $\tilde{X}^n$ through the channel $(\mathcal{X},p(y|x),\mathcal{Y})$. Then, we have
\begin{align*}
& \mbox{Pr}((X^n,Y^n) \in A_{\epsilon,0}^{(n)}(X,Y))\\
=&\sum_{(x^n,y^n) \in A_{\epsilon,0}^{(n)}(X,Y)} p(x^n)p(y^n|x^n)\\
=&  \sum_{(x^n,y^n) \in A_{\epsilon,0}^{(n)}(X,Y)} \mbox{Pr}(\tilde{X}^{n} =x^n|\tilde{X}^n \in A_{\epsilon,0}^{(n)}(X))\cdot \mbox{Pr}(Y^n=y^n|X^n=x^n) \\
=&  \sum_{(x^n,y^n) \in A_{\epsilon,0}^{(n)}(X,Y)} \mbox{Pr}(\tilde{X}^{n} =x^n|\tilde{X}^n \in A_{\epsilon,0}^{(n)}(X))\cdot \mbox{Pr}(\tilde{Y}^n=y^n|\tilde{X}^n=x^n) \\
=&  \sum_{(x^n,y^n) \in A_{\epsilon,0}^{(n)}(X,Y)} \mbox{Pr}((\tilde{X}^n, \tilde{Y}^n)=(x^n,y^n)|\tilde{X}^n \in A_{\epsilon,0}^{(n)}(X))\\
=&  \mbox{Pr}((\tilde{X}^n, \tilde{Y}^n) \in A_{\epsilon,0}^{(n)}(X,Y)| \tilde{X}^n \in A_{\epsilon,0}^{(n)}(X))\\
=& \frac{\mbox{Pr}((\tilde{X}^n, \tilde{Y}^n) \in A_{\epsilon,0}^{(n)}(X,Y))}{\mbox{Pr}(\tilde{X}^n \in A_{\epsilon,0}^{(n)}(X))}\\
\to & 1, \text{as $n \to \infty$.}
\end{align*}

Proof of Part 2): Denote the $\epsilon$-weakly typical sets with respect to $p_0(x)$, $p_0(y)$ and $p_0(x,y)$ by $W_{\epsilon,0}^{(n)}(X)$, $W_{\epsilon,0}^{(n)}(Y)$ and $W_{\epsilon,0}^{(n)}(X,Y)$ respectively. Along the same line as in the proof of part 1), we can prove that $\mbox{Pr}((X^n, Y^n) \in W_{\epsilon,0}^{(n)}(X,Y)) \to 1$ as $n \to \infty$, and hence $\mbox{Pr}(X^n \in W_{\epsilon,0}^{(n)}(X))$ and $\mbox{Pr}(Y^n \in W_{\epsilon,0}^{(n)}(Y))$ both go to 1 as $n \to \infty$.

Now, consider $H(Y^n)$. We have
\begin{align}
\nonumber H(Y^n)=&\sum_{y^n}p(y^n)\log\frac{1}{p(y^n)}\\
\nonumber =&\sum_{y^n \in W_{\epsilon,0}^{(n)}(Y)}p(y^n)\log\frac{1}{p(y^n)}+\sum_{y^n \notin W_{\epsilon,0}^{(n)}(Y)}p(y^n)\log\frac{1}{p(y^n)}\\
\nonumber =&:\phi_1+\phi_2
\end{align}

For $\phi_1$, we have
\begin{align*}
\phi_1=&\sum_{y^n \in W_{\epsilon,0}^{(n)}(Y)}p(y^n)\log\frac{1}{p(y^n)} \\
\nonumber \leq & \sum_{y^n \in W_{\epsilon,0}^{(n)}(Y)}p(y^n)\log 2^{n(H_0(Y)+\epsilon)}\\
\nonumber =&n(H_0(Y)+\epsilon) \mbox{Pr}(Y^n\in W_{\epsilon,0}^{(n)}(Y) )\\
=&n(H_0(Y)+\epsilon) (1-o(1)),
\end{align*}
where the inequality follows from the fact that $p(y^n) \geq 2^{-n(H_0(Y)+\epsilon)}$ for any $y^n \in W_{\epsilon,0}^{(n)}(Y)$.

For $\phi_2$, we have
\begin{align}
\nonumber\phi_2=&\sum_{y^n \notin W_{\epsilon,0}^{(n)}(Y)}p(y^n)\log\frac{1}{p(y^n)} \\
\nonumber=&-\sum_{y^n \in W_{\epsilon,0}^{(n)c}(Y)}p(y^n)\log p(y^n)  \\
\leq & -\left(\sum_{y^n \in W_{\epsilon,0}^{(n)c}(Y)}p(y^n)\right)\log \frac{\sum_{y^n \in W_{\epsilon,0}^{(n)c}(Y)}p(y^n)}{|W_{\epsilon,0}^{(n)c}(Y)|}\label{E:logsum}\\
\nonumber =&-\mbox{Pr}(Y^n\notin W_{\epsilon,0}^{(n)}(Y))\log \frac{\mbox{Pr}(Y^n\notin W_{\epsilon,0}^{(n)}(Y))}{|W_{\epsilon,0}^{(n)c}(Y)|}\\
\nonumber =&-\mbox{Pr}(Y^n\notin W_{\epsilon,0}^{(n)}(Y))\log  \mbox{Pr}(Y^n\notin W_{\epsilon,0}^{(n)}(Y)) + \mbox{Pr}(Y^n\notin W_{\epsilon,0}^{(n)}(Y))\log |W_{\epsilon,0}^{(n)c}(Y)|\\
=& o(1) + \mbox{Pr}(Y^n\notin W_{\epsilon,0}^{(n)}(Y))\log |W_{\epsilon,0}^{(n)c}(Y)| \label{E:fromaep}\\
\nonumber \leq& o(1) + \mbox{Pr}(Y^n\notin W_{\epsilon,0}^{(n)}(Y))\log |\mathcal{Y}|^n\\
\nonumber =& o(1)+ n \cdot \mbox{Pr}(Y^n\notin W_{\epsilon,0}^{(n)}(Y))\log |\mathcal{Y}|\\
=& n \cdot o(1).\label{E:fromaep1}
\end{align}

(\ref{E:logsum}) follows from the the log sum inequality (see Theorem 2.7.1 in \cite{Coverbook}), which states that for non-negative numbers, $a_1,a_2,\ldots,a_n$ and $b_1,b_2,\ldots,b_n$,
\begin{align*}
\sum_{i=1}^{n}a_i \log \frac{a_i}{b_i}\geq \left(\sum_{i=1}^{n}a_i\right) \log \frac{\sum_{i=1}^{n}a_i}{\sum_{i=1}^{n}b_i}
\end{align*}
with equality if and only if $\frac{a_i}{b_i}$ are equal for all $i$.

(\ref{E:fromaep}) and (\ref{E:fromaep1}) both follow from the fact that $\mbox{Pr}(Y^n \in W_{\epsilon,0}^{(n)}(Y)) \to 1$ as $n \to \infty.$

Therefore,
\begin{align}
\nonumber H(Y^n)=&\phi_1+\phi_2\\
\nonumber \leq & n(H_0(Y)+\epsilon) (1-o(1))+n \cdot o(1)\\
=&n(H_0(Y)+\epsilon) (1-o(1)).\label{E:combiningagain1}
\end{align}

Similarly, we have
\begin{align}
\nonumber H(Y^n)\geq &\sum_{y^n \in W_{\epsilon,0}^{(n)}(Y)}p(y^n)\log\frac{1}{p(y^n)}\\
\nonumber \geq & \sum_{y^n \in W_{\epsilon,0}^{(n)}(Y)}p(y^n)\log 2^{n(H_0(Y)-\epsilon)}\\
\nonumber =&n(H_0(Y)-\epsilon) \mbox{Pr}(Y^n\in W_{\epsilon,0}^{(n)}(Y) )\\
=&n(H_0(Y)-\epsilon) (1-o(1)).\label{E:combiningagain2}
\end{align}
Combining (\ref{E:combiningagain1}) and (\ref{E:combiningagain2}), we have $\lim_{n \to \infty}\frac{1}{n}H(Y^n)=H_0(Y)$.

Along the same line as above, we can also prove that $\lim_{n \to \infty}\frac{1}{n}H(X^n)=H_0(X)$ and $\lim_{n \to \infty}\frac{1}{n}H(X^n,Y^n)=H_0(X,Y)$, which concludes the proof of Lemma \ref{L:JAEP}.

\end{document}